\tikzset{
    every picture/.style={>=stealth,auto,node distance=2cm,}%double equal sign distance},
    %every edge/.style={font=\small,double equal sign distance},
    %every state/.style={fill=red,draw=none,text=white}
}
\tikzstyle{every state}=[
\title[Equivalence Problems for
  Commutative Grammars]{Tightening The Complexity of Equivalence
  Problems for Commutative Grammars$^*$} \thanks{$^*$Supported by
  Labex Digicosme, Univ.\ Paris-Saclay, project VERICONISS}
\author{Christoph Haase}
\author{Piotr Hofman}
\address{Laboratoire Sp\'ecification et V\'erification (LSV), CNRS
  \& ENS de Cachan, France}
\begin{document}

\begin{abstract}
  We show that the language equivalence problem for regular and
  context-free commutative grammars is \coNEXP-complete. In addition,
  our lower bound immediately yields further \coNEXP-completeness
  results for equivalence problems for communication-free Petri nets
  and reversal-bounded counter automata. Moreover, we improve both
  lower and upper bounds for language equivalence for
  exponent-sensitive commutative grammars.
\end{abstract}

\maketitle

\section{Introduction}

Language equivalence is one of the most fundamental decision problems
in formal language theory. Classical results
include \PSPACE-completeness of deciding language equivalence for
regular languages generated by non-deterministic finite-state automata
(NFA)~\cite[p.~265]{GJ79}, and the undecidability of language
equivalence for languages generated by context-free
grammars~\cite[p.~318]{HMU03}.

Equivalence problems for formal languages which are undecidable over
the free monoid may become decidable in the commutative setting. The
problem then is to decide whether the Parikh images of two languages
coincide. Given a word $w$ over an alphabet $\Sigma$ consisting of $m$
alphabet symbols, the Parikh image of $w$ is a vector in $\N^m$
counting in its $i$-th component how often the $i$-th alphabet symbol
occurs in $w$. This definition can then be lifted to languages, and
the Parikh image of a language consequently becomes a subset of
$\N^m$, or, equivalently, a subset of $\Sigma^\odot$, the free
commutative monoid generated by $\Sigma$. Parikh's theorem states that
Parikh images of context-free languages are semi-linear sets. Since
the latter are closed under all Boolean operations~\cite{Gin66},
deciding equivalence between Parikh images of context-free languages
is decidable.

When dealing with Parikh images of formal languages, it is technically
more convenient to directly work with commutative grammars, which were
introduced by Huynh in his seminal paper~\cite{Huy83} and are
``generating devices for commutative languages [that] use [the] free
commutative monoid instead of [the] free monoid.'' In~\cite{Huy83},
Huynh studied the uniform word problem for various classes of
commutative grammars; the complexity of equivalence problems for
commutative grammars was subsequently investigated in a follow-up
paper~\cite{Huy85}. One of the main results in~\cite{Huy85} is that
the equivalence problem for regular and context-free commutative
grammars is $\ComplexityFont{\Pi_2^{\P}}$-hard and in \coNEXP. Huynh
remarks that a better upper bound might be possible, and states as an
open problem the question whether the equivalence problem for
context-free commutative grammars is
$\ComplexityFont{\Pi_2^\P}$-complete~\cite[p.~117]{Huy85}. Some
progress towards answering this question was made by Kopczy{\'n}ski
and To, who showed that inclusion and \emph{a fortiori} equivalence
for regular and context-free commutative grammars is
$\ComplexityFont{\Pi_2^\P}$-complete when the size of the alphabet is
fixed~\cite{KT10,Kop15}. One of the main contributions of this paper
is to answer Huynh's question negatively: we show that already for
regular commutative grammars the equivalence problem is
\coNEXP-complete. 
%% On the first sight, it may seem surprising that
%% deciding inclusion and equivalence for languages generated by regular
%% commutative grammars, i.e., Parikh images of languages generated by
%% NFA, is \coNEXP-hard, since language equivalence for NFA is
%% only \PSPACE-complete. Notice, however, that the intersection of the
%% Parikh image of a regular language with the Parikh image of its
%% complement may be non-empty. Therefore, in particular any attempt to
%% establish a singly-exponential upper bound on the length of a word
%% witnessing non-equivalence that were to be based on complementing NFA
%% would be futile.

Our \coNEXP\ lower bound is established by showing how to reduce
validity in the \coNEXP-complete $\Pi_2$-fragment of Presburger
arithmetic~\cite{Gra89,Haa14} (i.e. its $\forall^*\exists^*$-fragment)
to language inclusion for regular commutative grammars. A reduction
from this fragment of Presburger arithmetic has recently been used
in~\cite{HH14} in order to show \coNEXP-completeness of inclusion for
integer vector addition systems with states ($\mathbb{Z}$-VASS), and
this reduction is our starting point. Similarly to the standard
definition of vector addition systems with states, $\mathbb{Z}$-VASS
comprise a finite-state controller with a finite number of counters
which, however, range over the integers. Consequently, counters can be
incremented and decremented, may drop below zero, and the order in
which transitions in $\mathbb{Z}$-VASS are taken may commute along a
run---those properties are crucial to the hardness proof
in~\cite{HH14}. The corresponding situation is different and
technically challenging for regular commutative grammars. In
particular, alphabet symbols can only be produced but not deleted,
and, informally speaking, we cannot produce negative quantities of
alphabet symbols.

%% A matching \coNEXP-upper bound of the equivalence problem was
%% established by Huynh for context-free commutative
%% grammars~\cite[Thm.~2.10]{Huy85}. 
A further contribution of our paper is to establish a new upper bound
for the equivalence problem for exponent-sensitive commutative
grammars, a generalisation of context-free commutative grammars where
the left-hand sides of productions may contain an arbitrary number of
some non-terminal symbol. Exponent-sensitive commutative grammars were
recently introduced by Mayr and Weihmann in~\cite{MW13}, who
showed \PSPACE-completeness of the word problem, and membership in
\ComplexityFont{2}-\EXPSPACE\ of the equivalence problem. Our hardness
result implies that the equivalence problem is \coNEXP-hard, and we
also improve the \ComplexityFont{2}-\EXPSPACE-upper bound to
\co-$\ComplexityFont{2}\NEXP$.

Finally, commutative grammars are very closely related to Petri nets,
c.f.~\cite{Huy83,Esp97,Yen97,MW15}. We also discuss implications of
our results to equivalence problems for various classes of Petri nets
as well as reversal-bounded counter automata~\cite{Ibar78}.

%%% Local Variables:
%%% mode: latex
%%% TeX-master: "main.tex"
%%% ispell-local-dictionary: "british"
%%% LocalWords: MDP MC EXP Markov quantile MDPs PSpace PosSLP Toda's MajSAT SubsetSum DFA parikh th Prop Thm DFAs logarithmically logspace Huynh Parikh Parikh's Mayr Weihmann Huynh's Kopczy fortiori
%%% End:

\section{Preliminaries}\label{sec:preliminaries}
\subsection{Commutative Grammars.} 
Let $\Sigma=\{a_1,\ldots,a_m\}$ be a finite alphabet. The free monoid
generated by $\Sigma$ is denoted by $\Sigma^*$, and we denote by
$\Sigma^\odot$ the free commutative monoid generated by $\Sigma$. We
interchangeably use different equivalent ways in order to represent a
word $w\in \Sigma^\odot$. For $1\le j\le m$, let $i_j$ be the number
of times $a_j$ occurs in $w$, we equivalently write $w$ as
$w=a_1^{i_1}a_2^{i_2}\cdots a_m^{i_m}$, $w=(i_1,i_2,\ldots,i_m)\in
\N^m$ or $w:\Sigma \to \N$ with $w(a_j)=i_j$, whatever is most
convenient. By $\abs{w} = \sum_{1\le j\le m} i_j$ we the denote the
length of $w$, and the representation size $\# w$ of $w$ is
$\sum_{1\le j\le m} \ceil{\log i_j}$. Given $v,w\in \Sigma^\odot$, we
sometimes write $v+w$ in order to denote the concatenation $v\cdot w$
of $v$ and $w$. The empty word is denoted by $\epsilon$, and as usual
$\Sigma^+\defeq \Sigma^*\setminus \{\epsilon\}$ is the free semi-group
and $\Sigma^\oplus \defeq \Sigma^\odot\setminus \{ \epsilon \}$ the
free commutative semi-group generated by $\Sigma$. For $\Gamma
\subseteq \Sigma$, $\pi_\Gamma(w)$ denotes the projection of $w$ onto
alphabet symbols from $\Gamma$.

A commutative grammar (sometimes just grammar subsequently) is a tuple
$G=(N,\Sigma,S,P)$, where
\begin{itemize}
\item $N$ is the finite set of non-terminal symbols;
\item $\Sigma$ is a finite alphabet, the set of terminal symbols, such
  that $N\cap \Sigma=\emptyset$;
\item $S\in N$ is the axiom; and
\item $P\subset N^\oplus \times {(N\cup \Sigma)}^\odot$ is a finite
  set of productions.
\end{itemize}
The size of $G$, denoted by $\size{G}$, is defined as
\[
\size{G} \defeq \abs{N} + \abs{\Sigma} + \sum_{(V,W)\in P} \abs{V} + \abs{W}.
\]
Note that commutative words in $G$ are encoded in unary. Unless stated
otherwise, we use this definition of the size of a commutative grammar
in this paper.

Subsequently, we write $V\rightarrow W$ whenever $(V,W)\in P$. Let
$D,E\in {(N\cup \Sigma)}^\odot$, we say $D$ directly generates $E$,
written $D \Rightarrow_G E$, iff there are $F\in {(N\cup
  \Sigma)}^\odot$ and $V\rightarrow W \in P$ such that $D=V + F$ and
$E=F+W$.  We write $\Rightarrow_G^*$ to denote the reflexive
transitive closure of $\Rightarrow_G$, and if $U\Rightarrow^*_G V$ we
say that $U$ generates $V$. If $G$ is clear from the context, we omit
the subscript $G$. For $U\in N^\oplus$, the reachability set
$\reach(G,U)$ and the language $\lan(G,U)$ generated by $G$ starting
at $U$ are defined as
\begin{align*}
  \reach(G,U)  \defeq \{ W\in {(N\cup \Sigma)}^\odot : U \Rightarrow^* W \} &&
  \lan(G,U) \defeq \reach(G,U) \cap \Sigma^\odot.
\end{align*}
The reachability set $\reach(G)$ and the language $\lan(G)$ of $G$ are
then defined as $\reach(G)\defeq \reach(G,S)$ and $\lan(G)\defeq
\lan(G,S)$. The word problem is, given a commutative grammar $G$ and
$w\in \Sigma^\odot$, is $w\in \lan(G)$?  Our main focus in this paper
is, however, on the complexity of deciding language inclusion and
equivalence for commutative grammars: Given commutative grammars
$G,H$, language inclusion is to decide $\lan(G)\subseteq \lan(H)$, and
language equivalence is to decide $\lan(G) = \lan(H)$. Since our
grammars admit non-determinism, language inclusion and equivalence are
logarithmic-space inter-reducible.

By imposing restrictions on the set of productions, we obtain various
classes of commutative grammars. Following~\cite{Huy83,MW13}, given
$G=(N,\Sigma,S,P)$, we say that $G$ is
\begin{itemize}
\item of \emph{type-}0 if there are no restrictions on $P$;
\item \emph{context-sensitive} if $\abs{V}\ge \abs{W}$ for each
  $V\rightarrow W\in P$;
\item \emph{exponent-sensitive} if $V\in \{ {\{U\}}^\oplus : U\in N\}$
  for each $V\rightarrow W\in P$;
\item \emph{context-free} if $V\in N$ for each $V\rightarrow W\in P$;
\item \emph{regular} if $V\in N$ and $W \in (N\cup
  \{\epsilon\}) \cdot \Sigma^\odot$ for each $V\rightarrow W\in P$.
\end{itemize}

Equivalence problems for commutative grammars were studied by Huynh,
who showed that it is undecidable for context-sensitive and hence
type-0 grammars, and $\ComplexityFont{\Pi_2^P}$-hard and in
\coNEXP\ for regular and context-free commutative
grammars~\cite{Huy85}. The main contribution of this paper is to prove
the following theorem.

\begin{theorem}\label{thm:main}
  The language equivalence problem for regular and context-free
  commutative grammars problem is \coNEXP-complete.
\end{theorem}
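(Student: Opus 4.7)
The upper bound \coNEXP\ is already established by Huynh~\cite{Huy85}, so it remains to prove \coNEXP-hardness. Since language inclusion and equivalence are log-space inter-reducible, my plan is to reduce validity in the $\forall^*\exists^*$-fragment of Presburger arithmetic, which is \coNEXP-complete by~\cite{Gra89,Haa14}, to language inclusion for regular commutative grammars. Given a formula $\forall \mathbf{x}\, \exists \mathbf{y}.\, \varphi(\mathbf{x}, \mathbf{y})$ with $\varphi$ quantifier-free, I would construct regular commutative grammars $G_1, G_2$ over a common terminal alphabet so that $\lan(G_1)$ is the set of encodings of all possible $\mathbf{x}\in \N^k$, and $\lan(G_2)$ is the set of encodings of those $\mathbf{x}$ that admit some $\mathbf{y}$ with $\varphi(\mathbf{x},\mathbf{y})$ true; then $\lan(G_1)\subseteq \lan(G_2)$ exactly when the formula is valid.

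The high-level template is the reduction from \cite{HH14} used for $\mathbb{Z}$-VASS inclusion. Each variable would be represented by the multiplicities of designated terminal symbols; the exponentially large values required for lower-bound witnesses are generated by looping productions of the form $A \rightarrow A + a^k$, keeping both grammars of polynomial size. The grammar $G_1$ freely produces any $\mathbf{x}$-encoding, while $G_2$ non-deterministically guesses $\mathbf{y}$ in its non-terminals and emits the $\mathbf{x}$-encoding only after completing a derivation that corresponds to a satisfying evaluation of $\varphi$.

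The principal obstacle, flagged in the introduction, is that terminals of a commutative grammar can only be produced, never consumed, and never represent negative quantities, in sharp contrast to $\mathbb{Z}$-VASS counters which may freely decrement and even drop below zero. To enforce an arbitrary integer equation $\sum c_i z_i = d$ in this monotone setting, I would split each integer variable into positive and negative parts $z = z^+ - z^-$ and rewrite every equation as a balance between sums of non-negative contributions. Equality between those sums would then be enforced by coordinating the two grammars so that each production contributing to one side of a balance on the $G_1$ side is mirrored, up to the projection onto the $\mathbf{x}$-alphabet, by an analogous production sequence on the $G_2$ side; any mismatch leaves one grammar producing a terminal count the other cannot match.

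The subtlest point will be designing the encoding and the constraint gadgets so that (i)~every well-formed word of $\lan(G_1)$ can be reproduced in $\lan(G_2)$ exactly when a suitable $\mathbf{y}$ exists, and (ii)~malformed or partially-completed words cannot accidentally witness inclusion or its failure; this is delicate because in the monotone setting one cannot ``check and erase'' bookkeeping symbols. Binary-encoded Presburger coefficients would be handled by doubling gadgets that multiply contributions by powers of two via cycles of length linear in the bitlength. Correctness amounts to establishing a tight correspondence between derivations of $G_1$ and $G_2$ on a common word and valuations $(\mathbf{x}, \mathbf{y})$ satisfying $\varphi$, after which a routine inspection bounds the size of the constructed grammars polynomially in the input formula.
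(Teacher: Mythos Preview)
Your high-level plan---reduce from $\Pi_2$-Presburger to inclusion, split signs, handle monotonicity---matches the paper's, but three concrete ideas are missing, and without them the reduction does not go through.

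First, the paper does \emph{not} encode the variables $\vec{x}$ as terminal multiplicities. The alphabet has a pair $t_i^+,t_i^-$ for each \emph{atomic inequality} $t_i$, and a word $w$ represents the tuple of left-hand-side values $(w(t_i^+)-w(t_i^-))_i$. The grammar $G$ ranges over $\vec{x}$ and outputs these values; $H$ must match them. This lets $H$ decompose as $Y\cdot F_\psi\cdot I$: the gadget $Y$ picks $\vec{y}$ and outputs right-hand-side values; $I$ may add any number of $t_i^+t_i^-$ or lone $t_i^+$, which is exactly what encodes ``left $\geq$ right'' for the atoms $H$ commits to; and $F_\psi$ walks the Boolean tree of $\psi$, freeing $H$ to produce arbitrary $t_i^\pm$ for atoms rendered irrelevant by a chosen disjunct. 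Your plan to ``emit the $\vec{x}$-encoding only after completing a derivation that corresponds to a satisfying evaluation'' does not say how a monotone grammar tests anything; the per-atom encoding is what makes both the inequality test and the Boolean structure implementable.

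Second, the hard direction is ``$\phi$ valid $\Rightarrow$ $\lan(G)\subseteq\lan(H)$'': given $w\in\lan(G)$ arising from some $\vec{x}^*$, one needs $\vec{y}^\dagger$ with $\psi(\vec{x}^*,\vec{y}^\dagger)$ \emph{and} with $w_Y(t_i^\pm)\le w(t_i^\pm)$ componentwise, since otherwise $H$ overshoots and cannot match $w$. The paper obtains this from the Pottier/Domenjoud bound (Proposition~\ref{prop:pottier}): a satisfying $\vec{y}^\dagger$ exists with $\|\vec{y}^\dagger\|_\infty\le (1+\|\vec{x}^*\|_\infty)\cdot 2^{\poly(|\phi|)}$, and accordingly every production of $G$ pads all components by a fixed exponential constant $c$, guaranteeing $w(t_i^\pm)\ge c(1+\|\vec{x}^*\|_\infty)$, which then dominates $w_Y(t_i^\pm)$. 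Your ``mirroring productions'' does not supply this quantitative slack; without it the direction fails.

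Third, for \emph{regular} grammars the binary doubling of Remark~\ref{rmk:binary-encoding} is unavailable (it is inherently context-free), so ``cycles of length linear in the bitlength'' cannot produce the exponential constant $c$. The paper instead adds auxiliary symbols $p_0,\overline{p_1},p_1,\dots,p_j$, lets $G^r$ enforce only $w(p_i)=2w(\overline{p_i})$, and puts on the $H^r$ side a grammar $C_r^\Sigma$ whose language contains every word where some $w(p_i)\neq w(\overline{p_{i+1}})$; all ``mis-doubled'' words of $G^r$ are thus trivially covered by $H^r$, and only words with $w(p_j)=2^j w(p_0)$ remain to be matched by the original $H$. This use of the right-hand grammar to absorb ill-formed words is the regular substitute for context-free doubling and is not indicated in your sketch.
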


Exponent-sensitive grammars were only recently introduced by Mayr and
Weihmann~\cite{MW13}. They showed that the word problem
is \PSPACE-hard, and that language equivalence is
\PSPACE-hard and in \ComplexityFont{2}-$\EXPSPACE$. The lower bounds
require commutative words on the left-hand sides of productions to be
encoded in binary. The second contribution of our paper is to improve
those results as follows.
\begin{theorem}\label{thm:main-expsens}
  The language equivalence problem for exponent-sensitive commutative
  grammars is \coNEXP-hard and in
  \ComplexityFont{co}-$\ComplexityFont{2}\NEXP$.
\end{theorem}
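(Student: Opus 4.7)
The coNEXP lower bound is immediate from Theorem~\ref{thm:main}: every context-free commutative grammar is also an exponent-sensitive commutative grammar, since a production $U \rightarrow W$ with $U \in N$ is precisely the special case $V = U \in \{U\}^\oplus$ of an exponent-sensitive production. Hence the coNEXP-hardness of language equivalence for context-free commutative grammars transfers verbatim.

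For the $\ComplexityFont{co}$-$\ComplexityFont{2}\NEXP$ upper bound, the plan is to characterise the Parikh image $\lan(G)$ of an exponent-sensitive commutative grammar $G$ as a semilinear set whose representation has doubly-exponential size, and then exploit a standard witness argument. First, I would establish that $\lan(G)$ admits a semilinear representation $\bigcup_{i=1}^{N} b_i + \N P_i$ in which $N$, the magnitudes of the offsets $b_i$, and those of the periods $p \in P_i$ are all bounded by $2^{2^{n^{O(1)}}}$, where $n = \size{G}$. Then, using that two such semilinear sets, when unequal, must already differ on a vector of magnitude polynomial in the parameters of their representations, non-equivalence of $\lan(G)$ and $\lan(H)$ is witnessed by some word $w$ of doubly-exponential magnitude. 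The nondeterministic algorithm guesses such a $w$ (of exponential binary-representation size) and verifies $w \in \lan(G) \setminus \lan(H)$ or $w \in \lan(H) \setminus \lan(G)$; membership checks invoke the \PSPACE{} word-problem algorithm of Mayr and Weihmann~\cite{MW13} on the exponentially-represented input $w$, which runs in \EXPSPACE{} and hence well within $\ComplexityFont{2}\NEXP$.

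The heart of the argument is the doubly-exponential bound on the semilinear representation of $\lan(G)$. My strategy is to adapt Huynh's analysis of context-free commutative grammars~\cite{Huy85}, which yields a single-exponential bound thanks to the unary encoding of productions, and to absorb the extra exponential factor introduced by binary-encoded exponents on the left-hand sides. Concretely, derivations of $G$ are folded, Parikh-style, into a linear system over the productions: each production $U^k \rightarrow W$ contributes an integer linear constraint in which the coefficient $k$ appears explicitly, producing a system whose coefficients are doubly-exponentially bounded. Classical bounds on the magnitudes of minimal non-negative integer solutions of such systems (of Pottier or von zur Gathen--Sieveking type) then transfer to a corresponding bound on a semilinear representation of $\lan(G)$.

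The main obstacle lies in justifying this last bound rigorously. A naive attempt to reduce equivalence of exponent-sensitive grammars to equivalence of context-free commutative grammars by unrolling binary-encoded exponents into unary and then invoking Theorem~\ref{thm:main} is \emph{not} available: an exponent-sensitive production $U^k \rightarrow W$ cannot be faithfully simulated by context-free productions without either gaining or losing derivations, since firing such a production requires the simultaneous presence of $k$ copies of $U$. One must therefore work directly at the level of derivation trees, and control the combinatorial and arithmetic explosion by a careful structural argument ensuring that only doubly-exponentially many distinct ``multiplicity profiles'' genuinely contribute to $\lan(G)$.
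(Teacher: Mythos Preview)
Your lower-bound argument is correct and matches the paper.

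For the upper bound there is a genuine gap, and it is not where you think it is. You assert that a witness $w$ of non-inclusion has ``doubly-exponential magnitude'' and hence ``exponential binary-representation size''. This underestimates the witness. The semi-linear representation of $\reach(G)$ supplied by~\cite{MWtr13} (Proposition~\ref{prop:exp-sens-semi-lin} in the paper) has bases and periods of only polynomial bit-size, but it consists of \emph{doubly-exponentially many} linear sets. The number of linear sets enters Huynh's witness bound (Proposition~\ref{prop:semi-lin-inclusion}), so one only gets $\#w \le 2^{2^{\poly(\#G+\#H)}}$, i.e.\ the binary encoding of $w$ is itself doubly-exponential. Running the $\PSPACE$ word-problem algorithm of~\cite{MW13} on an input of that size yields a $\ComplexityFont{2}$-$\EXPSPACE$ procedure---exactly the bound of Mayr and Weihmann that Theorem~\ref{thm:main-expsens} is meant to improve. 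The paper makes this observation explicitly before describing how to do better.

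The idea you are missing is to bypass the word problem entirely. The paper computes, in deterministic doubly-exponential time, a Carath\'eodory-type refinement of the semi-linear representation (the \emph{Huynh representation}, Proposition~\ref{prop:caratheodory}) in which every linear set $L(\vec{b},Q)$ has $Q$ of full column rank and $\abs{Q}\le\abs{\Sigma}$. Membership of the guessed $w$ in each such linear set is then decidable by Gaussian elimination in time polynomial in $\#w$, so the whole verification after guessing runs in $\poly(\#w)=2^{2^{\poly}}$ time, placing the problem in $\co$-$\ComplexityFont{2}\NEXP$. Note also that the doubly-exponential semi-linear bound you propose to re-derive is already available as a black box from~\cite{MWtr13}; the work in this section is in the membership test, not in the semi-linear bound.
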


\begin{remark}\label{rmk:binary-encoding}
  For context-free commutative grammars, it is with no loss of
  generality possible to assume binary encoding of commutative words,
  which has, for instance, been remarked in~\cite{KT10}. For example,
  given a production $V\rightarrow a^{2^n}$, $n>0$, we can introduce
  fresh non-terminal symbols $V_1,\ldots, V_n$ and replace
  $V\rightarrow a^{2^n}$ by $V \rightarrow V_1V_1$, $V_{n}\rightarrow
  a$ and $V_i \rightarrow V_{i+1}V_{i+1}$ for every $1\le
  i<n$. Clearly, the grammar obtained by this procedure generates the
  same language and only results in a sub-quadratic blow-up of the
  size of the resulting grammar.
\end{remark}

\subsection{Presburger Arithmetic, Linear Diophantine Inequalities, and Semi-Linear
Sets.} Let $\vec{u}=(u_1,\ldots,u_m)$, $\vec{v}=(v_1,\ldots,v_m)\in
\Z^m$, the sum of $\vec{u}$ and $\vec{v}$ is defined component-wise,
i.e., $\vec{u}+\vec{v}=(u_1+v_1,\ldots,u_m+v_m)$. Given $u\in \Z$,
$\hat{\vec{u}}$ denotes the vector consisting of $u$ in every
component and any appropriate dimension. Let $1\le i\le j\le m$, we
define $\pi_{[i,j]}(\vec{u})\defeq (u_i,\ldots,u_j)$. By
$\norm{\vec{u}}_\infty$ we denote the maximum norm of $\vec{u}$, i.e.,
$\norm{\vec{u}}_\infty\defeq \max\{\abs{u_i} : 1\le i\le n \}$. Let
$M, N\subseteq \Z^m$ and $k\in \Z$, as usual $M+N$ is defined as $\{
\vec{m}+\vec{n} : \vec{m}\in M,~\vec{n}\in N \}$ and $k\cdot M \defeq
\{ k\cdot \vec{m} : \vec{m} \in M \}$. Moreover, $\norm{M}_\infty
\defeq \max\{\norm{\vec{z}}_\infty : \vec{z}\in M \}$.  The size
$\size\vec{u}$ of $\vec{u}$ is $\#\vec{u}\defeq \sum_{1\le i\le m}
\ceil{\log \abs{u_i}}$, i.e., numbers are encoded in binary, and the
size of $M$ is $\#M\defeq \sum_{\vec{u}\in M}\size \vec{u}$.  For an
$m\times n$ matrix $A$ consisting of elements $a_{ij}\in \Z$,
$\norm{A}_{1,\infty} \defeq \max\{ \sum_{1\le j\le n} \abs{a_{ij}} :
1\le i\le m \}$.

Presburger arithmetic is the is the first-order theory of the
structure ${\langle \mathbb{N},0,1,+,\ge \rangle}$. In this paper,
atomic formulas of Presburger arithmetic are linear Diophantine
inequalities of the form
\[
\sum_{1\le i\le n} a_i\cdot x_i \ge z_i,
\]
where $a_i, z_i\in \Z$ and the $x_i$ are first-order
variables. Formulas of Presburger arithmetic can then be obtained in
the usual way via positive Boolean combinations of atomic formulas and
existential and universal quantification over first-order variables,
i.e., according to the following grammar:
\[
\phi ::= \forall \vec{x}.\phi \mid \exists \vec{x}.\phi \mid \phi \wedge \phi
\mid \phi \vee \phi \mid t
\]
Here, the $\vec{x}$ range over tuples of first-order variables, and
$t$ ranges over linear Diophantine inequalities as above. We assume
that formulas of Presburger arithmetic are represented as a syntax
tree, with no sharing of sub-formulas.

Given a formula $\phi$ of Presburger arithmetic with no free
variables, validity is to decide whether $\phi$ holds with respect to
the standard interpretation in arithmetic.  By $\norm{\phi}_\infty$ we
denote the largest constant occurring in $\phi$, and $\abs{\phi}$ is
the length of $\phi$, i.e., the number of symbols required to write
down $\phi$, where constants are represented in unary. In analogy to
matrices, we define $\norm{\phi}_{1,\infty} \defeq \norm{\phi}_\infty
\cdot \abs{\phi}$. Let $\psi(\vec{x})$ be a quantifier-free formula
open in $\vec{x}=(x_1,\ldots,x_m)$ and
$\vec{x}^*=(x_1^*,\ldots,x_m^*)\in \N^m$, we denote by
$\psi[\vec{x}^*/\vec{x}]$ the formula obtained from $\psi$ by
replacing every $x_i$ in $\psi$ by $x_i^*$. Finally, given a
quantifier-free Presburger formula $\psi$ containing linear
Diophantine inequalities $t_1,\ldots,t_k$ and $b_1,\ldots,b_k\in
\{0,1\}$, $\psi[b_1/t_1,\ldots,b_k/t_k]$ denotes the Boolean formula
obtained from $\psi$ by replacing every $t_i$ with $b_i$.

In this paper, we are in particular interested in the $\Pi_2$-fragment
of Presburger arithmetic, for which the following is
known\footnote{The hardness result is stated under polynomial-time
  many-one reductions in~\cite{Haa14} .}.
\begin{proposition}[\cite{Gra89,Haa14}]\label{prop:pa-pi-2-complexity}
  Validity in the $\Pi_2$-fragment of Presburger arithmetic is
  \coNEXP-complete.
\end{proposition}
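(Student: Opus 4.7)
The plan is to prove matching \coNEXP\ upper and lower bounds, both relying on the small-model machinery of Presburger arithmetic.

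For the upper bound, I would show that the dual \mbox{$\Sigma_2$-fragment}, i.e., satisfiability of sentences of the form $\exists \vec{x}\, \forall \vec{y}\, \psi(\vec{x}, \vec{y})$ with $\psi$ quantifier-free, is in \NEXP. The key ingredient is a small-model property due to Gr\"adel: if such a sentence is satisfiable, then there is a witness $\vec{x}^*$ with $\norm{\vec{x}^*}_\infty$ at most doubly exponential in $|\phi|$, i.e., of exponential bit-length. A \NEXP\ machine guesses $\vec{x}^*$ in binary and must then verify the remaining $\Pi_1$ sentence $\forall \vec{y}\, \psi[\vec{x}^*/\vec{x}]$. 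After substitution the formula still has only polynomially many linear inequalities, but with exponentially large binary coefficients; applying the small-model property once more bounds the bit-length of any falsifying $\vec{y}^*$ polynomially in these coefficients, so that checking nonexistence of $\vec{y}^*$ is a coNP subroutine over the (exponentially sized) substituted formula, which fits inside the overall \NEXP\ budget.

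For the lower bound, I would reduce from the canonical \coNEXP-complete problem of \emph{succinct tautology}: given a polynomial-size circuit $C$ that on input $i\in\{0,1\}^n$ outputs the $i$-th clause of a CNF $\Phi$ of length $2^n$ over $2^n$ variables, decide whether $\Phi$ is a tautology. The encoding exploits binary constants: a $2^n$-bit assignment is packed into a single Presburger variable $X$ quantified under $\forall X$, and its $i$-th bit is recovered by linear constraints of the shape $X = q \cdot 2^{i+1} + r$ with $r$ bounded modulo $2^{i+1}$, using the binary-encoded coefficient $2^i$ of polynomial size. The inner $\exists$-block guesses a clause index $j$ and a literal within the $j$-th clause, propagates values through the polynomial-size gate structure of $C$ via linear constraints on Boolean-valued auxiliary variables, and finally checks that the chosen literal is satisfied by the appropriate bit of $X$; the resulting $\Pi_2$ sentence is valid iff every assignment satisfies some clause of $\Phi$.

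The main obstacle is the lower bound: performing bit extraction from the exponentially long $X$ together with the circuit simulation of $C$ inside a quantifier-free block of polynomial size. This forces binary encoding of the coefficients $2^i$ and a uniform, quantified treatment of the bit index $i$ rather than enumerating each of the $2^n$ positions explicitly. On the upper-bound side, the subtler point is that substituting an exponentially long $\vec{x}^*$ only enlarges coefficients and not the structural size of $\psi$; this observation, sharper than the generic quantifier-elimination bounds employed in~\cite{Gra89}, is precisely what keeps the inner coNP check within \NEXP\ and constitutes the refinement obtained in~\cite{Haa14}.
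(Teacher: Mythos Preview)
The paper does not prove this proposition; it is quoted from \cite{Gra89,Haa14}, with the footnote attributing hardness to~\cite{Haa14} and, implicitly, the upper bound to~\cite{Gra89}. There is thus no in-paper argument to compare against. Two gaps in your sketch are nonetheless worth pointing out.

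For the upper bound, the phrase ``a coNP subroutine \ldots\ which fits inside the overall \NEXP\ budget'' is where the argument breaks. A \NEXP\ machine has only existential nondeterminism; after guessing the exponential-length $\vec{x}^*$, a co-nondeterministic verification of $\forall\vec{y}\,\psi[\vec{x}^*/\vec{x}]$ places the whole computation in the second level of the weak exponential hierarchy, not in \NEXP. The step is repairable---after substitution only the constant terms blow up while the number of $y$-variables and atoms stays polynomial, so integer (in)feasibility for each of the $2^{\poly(\abs{\phi})}$ sign patterns can be decided \emph{deterministically} in exponential time via a Lenstra-type algorithm, or alternatively one eliminates the $\forall\vec{y}$ block first---but that is not what you wrote, and your closing sentence attributing the \NEXP\ upper bound to a refinement in~\cite{Haa14} inverts the roles of the two references.

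For the lower bound, the bit-extraction constraint $X=q\cdot 2^{i+1}+r$ needs $2^i$ as a coefficient with $i$ determined by the existentially quantified clause and literal indices, hence itself a variable. The graph of $i\mapsto 2^i$ is not Presburger-definable, so ``the binary-encoded coefficient $2^i$ of polynomial size'' only makes sense for each fixed $i$, and then you would need $2^n$ separate extraction formulas, destroying succinctness. You flag this as the main obstacle but supply no mechanism to overcome it; a ``uniform, quantified treatment of the bit index'' of the kind you need does not exist in Presburger arithmetic.
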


The sets of natural numbers definable in Presburger arithmetic are
semi-linear sets~\cite{GS64}. Let $\vec{b}\in \N^m$ and $P=\{
\vec{p}_1,\ldots,\vec{p}_n \}$ be a finite subset of $\N^m$, define
\[
\cone(P) \defeq \left\{ \lambda_1\cdot \vec{p}_1 + \cdots +
\lambda_n\cdot \vec{p}_n : \lambda_i\in \N,~1\le i\le n \right\}.
\]
A linear set $L(\vec{b},P)$ with base $\vec{b}$ and periods $P$ is
defined as $L(\vec{b}, P) \defeq \vec{b} + \cone(P)$. A semi-linear
set is a finite union of linear sets. For convenience, given a finite
subset $B$ of $\N^m$, we define
\[
L(B,P) \defeq \bigcup_{\vec{b}\in B} L(\vec{b}, P).
\]
The size of a semi-linear set $M = \bigcup_{i\in I}
L(B_i,P_i)\subseteq \N^m$ is defined as
\[
\size{M} \defeq \sum_{i\in I} \# B_i + \abs{B_i}\cdot \#P_i.
\]
In particular, numbers are encoded in binary. Given a semi-linear set
$N\subseteq \N^m$, $\#N$ is the minimum over the sizes of all
semi-linear sets $M=\bigcup_{i\in I}L(\vec{b}_i,P_i)$ such that $N=M$.

A system of linear Diophantine inequalities $D$ is a conjunction of
linear inequalities over the same first-order variables
$\vec{x}=(x_1,\ldots,x_n)$, which we write in the standard way as $D:
A\cdot \vec{x} \ge \vec{c}$, where $A$ is a $m\times n$ integer matrix
and $\vec{c}\in \N^m$. The size $\#D$ of $D$ is the number of symbols
required to write down $D$, where we assume binary encoding of
numbers. The set of solutions of $D$ is denoted by $\eval{D}\subseteq
\N^n$. We say that $D$ is feasible if $\eval{D}\neq \emptyset$.
In~\cite{Pot91,Dom91}, bounds on the semi-linear representation of
$\eval{D}$ are established. The following proposition is a consequence
of Corollary~1 in~\cite{Pot91} and Theorem~5 in~\cite{Dom91}.
\begin{proposition}[\cite{Pot91,Dom91}]\label{prop:pottier}
  Let $D : A\cdot \vec{x} \ge \vec{c}$ be a system of linear
  Diophantine inequalities such that $A$ is an $m\times n$
  matrix. Then $\eval{D}=L(B,P)$ for $B,P\subseteq \N^n$ such that
  $\abs{P}\le \binom{n}{m}$ and 
  \[
  \norm{B}_\infty,\norm{P}_\infty \le 
  {(\norm{A}_{1,\infty} + \norm{\vec{c}}_\infty + 2)}^{m+n}.
  \]
\end{proposition}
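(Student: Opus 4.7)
The plan is to reduce to a homogeneous equality system by introducing slack variables, apply the two cited results in tandem, and then project the resulting semi-linear representation back onto the original variables.

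Concretely, I would introduce a vector of slack variables $\vec{s}\in\N^m$ and rewrite $D$ as the equality system $D'$: $A\vec{x}-\vec{s}=\vec{c}$, with coefficient matrix $A'=[A\mid -I_m]$ of dimension $m\times(n+m)$. Then $\eval{D}=\pi_{[1,n]}(\eval{D'})$ and $\norm{A'}_{1,\infty}\le\norm{A}_{1,\infty}+1$. Any semi-linear decomposition $\eval{D'}=L(B',P')$ pushes forward to a semi-linear representation $L(\pi_{[1,n]}(B'),\pi_{[1,n]}(P'))$ of $\eval{D}$, and projection can neither increase the infinity norm of any vector nor enlarge a generating set. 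I would take $B'$ to consist of the minimal non-negative integer solutions of $D'$ and $P'$ to be the Hilbert basis of the associated homogeneous system $A\vec{x}-\vec{s}=\vec{0}$.

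Next, I would invoke Pottier's Corollary~1 on $D'$ to bound $\norm{B'}_\infty$ and $\norm{P'}_\infty$ by ${(1+\norm{A'}_{1,\infty}+\norm{\vec{c}}_\infty)}^{m+n}\le{(\norm{A}_{1,\infty}+\norm{\vec{c}}_\infty+2)}^{m+n}$, which after projection yields the claimed norm bound. For the cardinality bound I would invoke Domenjoud's Theorem~5 applied to $A\vec{x}-\vec{s}=\vec{0}$, which controls the size of the Hilbert basis by a binomial coefficient in the number of rows and variables; after projection onto the first $n$ coordinates and pruning duplicates this produces the stated $\abs{P}\le\binom{n}{m}$.

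The only genuine difficulty is the bookkeeping: the statements of~\cite{Pot91,Dom91} are phrased for equality systems and in slightly different normalisations, so one must verify that the slack-variable augmentation together with the subsequent projection produces exactly the bounds claimed. In particular, the binomial $\binom{n}{m}$ is sharper than what one obtains by naively applying Domenjoud's theorem to the augmented $m\times(n+m)$ system, and extracting this tighter bound—using the fact that the columns contributed by $-I_m$ are linearly independent and so constrain the structure of the Hilbert basis—is the most delicate point of the argument. Beyond this, no ideas beyond~\cite{Pot91,Dom91} are needed.
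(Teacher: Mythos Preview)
The paper does not actually prove this proposition: it is stated as a direct consequence of Corollary~1 in~\cite{Pot91} and Theorem~5 in~\cite{Dom91}, with no argument given beyond that attribution. Your sketch is therefore not competing with a proof in the paper but rather reconstructing what the authors leave implicit, and the overall strategy---introduce slacks to pass to an equality system, invoke Pottier for the norm bound and Domenjoud for the cardinality bound, then project---is the natural and intended route.

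That said, you correctly flag the one genuine subtlety: a naive application of Domenjoud's theorem to the augmented $m\times(n+m)$ system would yield a binomial in $n+m$, not the sharper $\binom{n}{m}$ stated here. Your suggestion that the $-I_m$ block structure should be exploited is the right instinct, but you have not actually carried this out, and it is not entirely routine; one really has to go back to Domenjoud's characterisation of minimal solutions in terms of supports and linear independence to see why the slack coordinates do not contribute additional freedom to the count. Since the paper itself offers no details here and simply defers to the sources, your proposal is as complete as the paper's own treatment---but be aware that fully justifying the $\binom{n}{m}$ bound is more than bookkeeping.
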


%% \begin{proposition}[\cite{????}]\label{prop:pottier2}
%%   Let $D : A\cdot \vec{x} \ge \vec{c}$ be a system of linear
%%   Diophantine inequalities such that $A$ is an $m\times n$
%%   matrix. Then $\eval{D}=L(B,P)$ for $B,P\subseteq \N^n$ such that
%% %
%%   \[
%%   |P| \le 
%%   \]
%% \end{proposition}\todo{Added proposition we have to make a proper reference to it}

%% The following proposition establishes a strongly polynomial-time
%% algorithm for the feasibility problem of a system of linear
%% Diophantine inequalities in a fixed dimension.
%% %
%% \begin{proposition}[\cite{FT87}]\label{prop:frank-tardos-ip}
%%   Let $D:A\cdot \vec{x}\ge \vec{c}$ be a system of linear Diophantine
%%   inequalities such that $A$ is an $m\times n$ matrix. Then
%%   feasibility of $D$ can be decided using $n^{2.5n+o(n)}\cdot \abs{D}$
%%   arithmetic operations and space polynomial in $\abs{D}$.
%% \end{proposition}

%%% Local Variables:
%%% mode: latex
%%% TeX-master: "main.tex"
%%% ispell-local-dictionary: "british"
%%% LocalWords: MDP MC EXP Markov quantile MDPs PSpace PosSLP Toda's MajSAT SubsetSum DFA parikh th Prop Thm DFAs logarithmically logspace Huynh Parikh Parikh's Mayr Weihmann Huynh's Kopczy fortiori ij
%%% End:

\section{Lower Bounds}
In this section, we establish the \coNEXP-lower bound of
Theorems~\ref{thm:main} and~\ref{thm:main-expsens}. This is shown by
establishing \coNEXP-hardness of language inclusion for regular
commutative grammars. However, for the sake of a clear presentation,
we will first describe the reduction for context-free commutative
grammars, and then show how the approach can be adapted to regular
commutative grammars. Finally, we also show that even reachability
equivalence is \coNEXP-hard.

As stated in the introduction, we reduce from validity in the
$\Pi_2$-fragment of Presburger arithmetic.  To this end, let $\phi =
\forall \vec{x}. \exists \vec{y}.\psi(\vec{x},\vec{y})$ such that
$\vec{x} = (x_1,\ldots,x_m)$, $\vec{y} = (y_1,\ldots,y_n)$, and $\psi$
is a positive Boolean combination of atomic formulas
$t_1,\ldots,t_k$. For our reduction, we write atomic formulas of
$\psi$ as
\[
t_i : \sum_{1\le j\le m} (a_{i,j}^+ - a_{i,j}^-) \cdot x_j + z_i^+ - z_i^- 
\ge \sum_{1\le j\le n} (b_{i,j}^+ - b_{i,j}^-) \cdot y_j,
\]
where the $a_{i,j}^+,a_{i,j}^-\in \N$ are such that $a_{i,j}^+=0$ or
$a_{i,j}^-=0$, and likewise the $b_{i,j}^+,b_{i,j}^-\in \N$ are such
that $b_{i,j}^+=0$ or $b_{i,j}^-=0$, and the $z_i^+,z_i^-\in \N$ such
that $z_i^+=0$ or $z_i^-=0$. Moreover, in the following we set
$a_{i,j}\defeq a_{i,j}^+ - a_{i,j}^-$, $b_{i,j}\defeq b_{i,j}^+ -
b_{i,j}^-$ and $z_i\defeq z_i^+ - z_i^-$.

\begin{example}\label{ex:running-ex-1}
  Consider the formula $\phi$ that we will use as our running example
  such that $\phi = \forall x.\exists y.\psi(x,y)$, where
  $\psi(x,y)=(t_1 \wedge t_2) \vee (t_3 \wedge t_4)$ and
  \begin{align*}
    t_1 & = x \ge 2\cdot y & t_3 & = x + 1\ge 2\cdot y\\
    t_2 & = -x \ge -2\cdot y & t_4 & =  -x - 1 \ge -2 \cdot y,
  \end{align*}
  which expresses that every natural number is either even or
  odd. Here, for instance, $a_{2,1}^+ = 0$, $a_{2,1}^- = 1$, $z_1^+=
  z_1^-=0$, $b_{2,1}^+=0$ and $b_{2,1}^-=2$. Hence $a_{2,1} = -1$,
  $z_2=0$ and $b_{2,1}=-2$.\qedx
\end{example}

With no loss of generality and due to unary encoding of numbers in
$\phi$, we may assume that the following inequalities hold:
\begin{align}\label{eqn:formula-inequalities}
  \abs{\phi} & \ge 2 + m + n + k & \abs{\phi} & \ge \norm{\phi}_\infty
\end{align}
We furthermore define the constant $c\in \N$, whose bit representation
is polynomial in $\abs{\phi}$, as
\begin{align}\label{eqn:constant-c}
  c \defeq \min \left\{ 2^n \ge 
  \abs{\phi}^{3\cdot \abs{\phi} + 2}\cdot 2^{|\phi|} : n\in \N \right\}.
\end{align}

Let $\Sigma\defeq \{ t_1^+, t_1^-,\ldots, t_k^+,t_k^- \}$, we now show
how to construct in logarithmic space context-free commutative
grammars $G,H$ over $\Sigma$ such that $\lan(G)\subseteq \lan(H)$ iff
$\phi$ is valid. The underlying idea is as follows: the language of
$G$ consists of all possible values of the left-hand sides of the
inequalities $t_i$ for every choice of $\vec{x}$, where the value of
some $t_i$ is represented by a word $w\in \Sigma^\odot$ via the
difference $w(t_i^+)-w(t_i^-)$. For every $w\in \Sigma^\odot$ and
$1\le i\le k$, we misuse notation and define $w(t_i)\defeq
w(t_i^+)-w(t_i^-)\in \Z$; note that in particular $t_i\not\in
\Sigma$. The grammar $H$ can then be defined in an analogous way and
produces the values of the right-hand sides of $H$ for a choice of
$\vec{y}$, but can in addition simulate the Boolean structure of
$\psi$ in order to tweak those $t_i$ for which, informally speaking,
it cannot obtain a good value. We explain the reduction in further
detail in due course, but for now give a small example and then turn
towards the formal definition of $G$.
\begin{example}\label{ex:running-ex-2}
  Let $\phi$ be our running example. We have that $\Sigma = \{
  t_1^+,t_1^-,\ldots,t_4^+,t_4^- \}$. Our goal is to define a
  context-free commutative grammar $G$ such that,
  \[
  \lan(G) = \{ (c,c,c,c,c+1,c,c,c+1) + i\cdot 
  (c+1,c,c,c+1,c+1,c,c,c+1)\in \Sigma^\odot : i\in \N \}.
  \]
  For any $w\in \lan(G)$, there is then some $x\in \N$ such that
  $w(t_1)=x$, $w(t_2)=-x$, $w(t_3)=x+1$ and $w(t_4)=-x-1$, those
  values which correspond to the possible values of the left-hand
  sides of the atomic formulas of $\phi$ for any choice of $x$.\qedx
\end{example}

Recall that we may represent commutative words of $\Sigma^\odot$ as
vectors of natural numbers, we define:
\begin{align}
  \label{eqn:u-def} u & \defeq (z_1^+,z_1^-,\ldots,z_k^+,z_k^-)\in \Sigma^\odot\\
  \label{eqn:v-i-def} v_i & \defeq (a_{1,i}^+,a_{1,i}^-, \ldots,a_{k,i}^+,a_{k,i}^-)
  \in \Sigma^\odot
  & (1\le i\le m)
\end{align}
The grammar $G$ is constructed as $G\defeq (N_G,\Sigma,S_G,P_G)$,
where $N_G \defeq \{ S, X\}$ and $P_G$ is defined as follows:
\begin{align*}
  S_G   & \rightarrow X \hat{\vec{c}} u  &   
  X & \rightarrow X \hat{\vec{c}} v_i  & (1\le i\le m)\\
  X   & \rightarrow \epsilon
\end{align*}
Here, $c$ is the constant from~(\ref{eqn:constant-c}) whose addition
ensures that the values of the $t_i^+$ and $t_i^-$ generated by $G$
are large. Moreover, recall that it follows from
Remark~\ref{rmk:binary-encoding} that $G$ can be constructed in
logarithmic space even though $c$ is exponential in $\abs{\phi}$. The
following lemma captures the essential properties of $G$.

\begin{lemma}\label{lem:lhs-language}
  Let $G$ be as above. The following hold:
  \begin{enumerate}[(i)]
  \item For every $\vec{x}\in \N^m$ there exists $w\in \lan(G)$ such
    that for all $1\le i\le k$,
    \[
%      w(t_i^+) - w(t_i^-) =
      w(t_i) = 
      \sum_{1\le j\le m} (a_{i,j}^+ - a_{i,j}^-) \cdot x_j + z_i^+ - z_i^-.
    \]
  \item For every $w\in \lan(G)$ there exists $\vec{x}\in \N^m$ such
    that for all $1\le i\le k$,
    \begin{align}
      \label{eqn:ineq-by-x-1}%  w(t_i^+) - w(t_i^-) & =
      w(t_i) & = 
      \sum_{1\le j\le m} (a_{i,j}^+ - a_{i,j}^-) \cdot x_j + z_i^+ - z_i^-\\
      \label{eqn:ineq-by-x-2} w(t_i^+) & \ge c+ z_i^+ + \sum_{1\leq j \leq m} 
      c \cdot x_j \ge c\cdot (1 + \norm{\vec{x}}_\infty)\\
      \label{eqn:ineq-by-x-3} w(t_i^-) & \ge c+ z_i^- + \sum_{1\leq j \leq m} 
      c \cdot x_j \ge c \cdot (1 + \norm{\vec{x}}_\infty).
    \end{align}
  \end{enumerate}
\end{lemma}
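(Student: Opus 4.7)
The plan is to analyse the shape of an arbitrary derivation in $G$ and read off the $x_j$'s directly from it. Since $G$ is context-free commutative, every successful derivation of a terminal word must begin with the unique $S_G$-production $S_G \rightarrow X\hat{\vec{c}}u$, apply each production $X \rightarrow X\hat{\vec{c}}v_j$ some number of times, say $x_j \in \N$, and then terminate with $X \rightarrow \epsilon$. (The non-terminal $X$ has to be eliminated, and only the $\epsilon$-rule does so.) By commutativity it suffices to count how often each production is used, so the resulting word is
\[
w \;=\; \hat{\vec{c}} + u \;+\; \sum_{j=1}^{m} x_j\bigl(\hat{\vec{c}} + v_j\bigr)
\;=\; \Bigl(1+\sum_{j=1}^m x_j\Bigr)\hat{\vec{c}} \;+\; u \;+\; \sum_{j=1}^m x_j\, v_j.
\]

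Reading off the $t_i^+$ and $t_i^-$ components using the definitions \eqref{eqn:u-def} and \eqref{eqn:v-i-def} gives, for every $1\le i\le k$,
\[
w(t_i^+) = \Bigl(1+\textstyle\sum_{j} x_j\Bigr) c + z_i^+ + \sum_{j} a_{i,j}^+\, x_j,
\qquad
w(t_i^-) = \Bigl(1+\textstyle\sum_{j} x_j\Bigr) c + z_i^- + \sum_{j} a_{i,j}^-\, x_j.
\]
Subtracting and using $a_{i,j}=a_{i,j}^+-a_{i,j}^-$, $z_i=z_i^+-z_i^-$ yields $w(t_i)=\sum_j a_{i,j}x_j+z_i$, which is exactly the left-hand side of $t_i$ evaluated at $\vec{x}$.

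For (i), given $\vec{x}\in \N^m$, simply use the production $X\rightarrow X\hat{\vec{c}}v_j$ exactly $x_j$ times and then apply $X\rightarrow \epsilon$; the computation above shows that the resulting word satisfies the required identity. For (ii), given $w\in \lan(G)$, let $x_j$ be the number of times the derivation of $w$ uses $X\rightarrow X\hat{\vec{c}}v_j$; then \eqref{eqn:ineq-by-x-1} follows from the identity above, while \eqref{eqn:ineq-by-x-2} and \eqref{eqn:ineq-by-x-3} follow by dropping the non-negative terms $\sum_j a_{i,j}^\pm x_j$ and using $\sum_j x_j \ge \max_j x_j = \norm{\vec{x}}_\infty$ together with $z_i^\pm\ge 0$.

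There is no real obstacle here: the argument is essentially bookkeeping, and the only point that requires a moment of care is checking that every successful derivation really has the claimed form, which is immediate from the fact that $X$ is the unique non-terminal that can appear after the initial step and that it can be eliminated only by the $\epsilon$-production. The exponential value of $c$ plays no role in this lemma; it is only introduced here so that $c$ dominates the quantities $z_i^\pm$ and $a_{i,j}^\pm$ uniformly, which will be exploited in the subsequent construction.
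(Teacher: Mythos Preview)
Your argument is correct and follows essentially the same approach as the paper: you identify the general shape of any terminal derivation in $G$, read off the explicit formulas for $w(t_i^+)$ and $w(t_i^-)$, and derive both parts from these. The paper does the same, just organised slightly differently (it treats (i) first via an explicit derivation and then reuses the computation for (ii)); your version front-loads the shape analysis and is arguably cleaner.
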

\begin{proof}
  Regarding~(i), let $\vec{x}=(x_1,\ldots,x_m)\in \N^m$ and consider
  the following derivation of $G$:
  \[
  S_G \Rightarrow X u \Rightarrow X v_1 u \Rightarrow^*
  X  v_1^{x_m} u \Rightarrow^* X  v_1^{x_1} \cdots v_{m-1}^{x_{m-1}} u
  \Rightarrow^* v_1^{x_1}\cdots v_{m-1}^{x_{m-1}} v_m^{x_m} u = w.
  \]
  For every $1\le i\le k$ we have
  \begin{align*}
  w(t_i^+) & = v_1(t_i^+) \cdot x_1 + \cdots + v_m(t_i^+) \cdot x_m + u(t_i^+)\\
  & = (a_{i,1}^+ + c)\cdot x_1 + \cdots + (a_{i,m}^++ 
  c)\cdot x_1 + z_i^+ + c\\
  & = \sum_{1\le j\le m} (a_{i,j}^+ + c)\cdot x_j + z_i^+ + c.
  \end{align*}
  In the same way, we obtain
  \[
  w(t_i^-) = \sum_{1\le j\le m} (a_{i,j}^- + c)\cdot x_j + z_i^- + c,
  \]
  whence
  \[
  w(t_i) = w(t_i^+) - w(t_i^-) = 
  \sum_{1\le j\le m} (a_{i,j}^+ - a_{i,j}^-) \cdot x_j + z_i^+ - z_i^-.
  \]
  
  Regarding~(ii), by the construction of $G$, any $w\in \lan(G)$ is
  of the form
  \[
  w = v_1^{x_1}\cdots v_{m-1}^{x_{m-1}} v_m^{x_m} u.
  \]
  Define $\vec{x}\defeq (x_1,\ldots,x_m)$ and let $1\le i\le k$,
  Equation~(\ref{eqn:ineq-by-x-1}) follows as in~(i). Moreover, since
  $u(t_i^+)=c+z_i^+$ and $v_j(t_i^+)\ge c$ for every $1\le j\le m$, we
  obtain inequality~(\ref{eqn:ineq-by-x-2}). The same argument allows
  for deriving~(\ref{eqn:ineq-by-x-3}).
\end{proof}
This completes the construction of $G$ and the proof of the relevant
properties of $G$. We now turn towards the construction of $H\defeq
(N_H,\Sigma,S_H,P_H)$ and define the set of non-terminals $N_H$ and
productions $P_H$ of $H$ in a step-wise fashion. Starting in $S_H$,
$H$ branches into three gadgets starting at the non-terminal symbols
$Y$, $F_\psi$ and $I$:
\[
S_H \rightarrow Y F_\psi I
\]
Here, $Y$ is an analogue to $X$ in $G$. Informally speaking, it allows
for obtaining the right-hand sides of the inequalities $t_i$ for a
choice of $\vec{y}\in \N^n$. In analogy to $G$, we define
\begin{align*}
  w_i & \defeq (b_{1,i}^+, b_{1,i}^-, \ldots, b_{k,i}^+,b_{k,i}^-)\in \Sigma^\odot 
  & (1\le i \le n)\\
  Y & \rightarrow Y w_i & (1\le i\le n)\\
  Y & \rightarrow \epsilon
\end{align*}
In contrast to $X$ from $G$, note that $Y$ does not add $\vec{c}$
every time it loops. The following lemma is the analogue of $H$ to
Lemma~\ref{lem:lhs-language} and can be shown along the same lines.
\begin{lemma}{\label{lem:rhs-y-language}}
  Let $Y$ be the non-terminal of $H$ as defined above. The following
  hold:
  \begin{enumerate}[(i)]
  \item For every $\vec{y}\in \N^n$ there exists $w\in \lan(H,Y)$ such that
    for all $1\le i \le k$, $w(t_i^+) = \sum_{1\le j\le n} b_{i,j}^+
    \cdot y_j$, $w(t_i^-) = \sum_{1\le j\le n} b_{i,j}^- \cdot y_j$, and
\[
      % w(t_i^+) - w(t_i^-) = 
      w(t_i) = \sum_{1\le j\le n} (b_{i,j}^+ - b_{i,j}^-) \cdot y_j.
\]
  \item For every $w\in \lan(H,Y)$ there exists $\vec{y}\in \N^n$ such
    that for all $1\le i \le k$,
\[
      % w(t_i^+) - w(t_i^-) 
      w(t_i) = \sum_{1\le j\le n} (b_{i,j}^+ - b_{i,j}^-) \cdot y_j.
\]   
  \end{enumerate}
\end{lemma}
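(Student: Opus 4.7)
The plan is to mirror the proof of Lemma~\ref{lem:lhs-language} almost verbatim, exploiting that the productions starting at $Y$ are structurally identical to those starting at $X$ in $G$, only without the additive $\hat{\vec{c}}$ at each loop step and without any $u$-contribution from an axiom step. Consequently, nothing from the analysis of $G$ breaks, and the bookkeeping becomes slightly simpler rather than harder.

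For (i), given $\vec{y} = (y_1,\ldots,y_n) \in \N^n$, I would exhibit the derivation
\[
Y \Rightarrow Y w_1 \Rightarrow^* Y w_1^{y_1} \Rightarrow^* Y w_1^{y_1} \cdots w_n^{y_n} \Rightarrow w_1^{y_1} \cdots w_n^{y_n} \defeq w,
\]
obtained by applying the production $Y \rightarrow Y w_j$ exactly $y_j$ times for each $j \in \{1,\ldots,n\}$ and finishing with $Y \rightarrow \epsilon$ (if some $y_j=0$ we simply omit the corresponding applications). Since $w_j(t_i^+) = b_{i,j}^+$ and $w_j(t_i^-) = b_{i,j}^-$ by definition of $w_j$, I would then compute
\[
w(t_i^+) = \sum_{1\le j\le n} w_j(t_i^+)\cdot y_j = \sum_{1\le j\le n} b_{i,j}^+\cdot y_j,
\]
and analogously $w(t_i^-) = \sum_{1\le j\le n} b_{i,j}^- \cdot y_j$; subtracting yields the claimed value of $w(t_i)$.

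For (ii), I would observe that the only productions applicable to $Y$ are $Y \rightarrow Y w_j$ for $1 \le j \le n$ and $Y \rightarrow \epsilon$, so any terminating derivation from $Y$ must apply the former productions some nonnegative numbers of times $y_1,\ldots,y_n$ and then apply the latter exactly once. Since $\Sigma^\odot$ is commutative, the order in which those loop productions are applied is immaterial, and the resulting word is $w = w_1^{y_1}\cdots w_n^{y_n}$ regardless of that order. Setting $\vec{y} \defeq (y_1,\ldots,y_n)$, the computation from~(i) applies verbatim and yields the desired equation for $w(t_i)$. There is no real obstacle here; the only point worth checking is that the simpler form of the productions for $Y$ (no added constant) leaves the combinatorial identities intact, which they trivially do.
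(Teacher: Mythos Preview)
Your proposal is correct and matches the paper's approach exactly: the paper does not spell out a proof but simply states that the lemma ``can be shown along the same lines'' as Lemma~\ref{lem:lhs-language}, which is precisely what you do. The derivation you exhibit and the bookkeeping for $w(t_i^+)$, $w(t_i^-)$ and their difference are all fine.
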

It is clear that the $w_Y$ generated by $Y$ may not be able to
generate all $t_i$ in a way that match all $w$ generated by $G$ (i.e.,
all choices of $\vec{x}$ made through $G$). For now, let us assume
that $w(t_i^+)\ge w_Y(t_i^+)$ and $w(t_i^-)\ge w_Y(t_i^-)$ holds for
every $1\le i\le k$. Later, we will show that if there is a good
choice for $\vec{y}$, we can find a good $w_Y\in \lan(H,Y)$ with this
property. After generating $w_Y$, informally speaking, $H$ should
produce $t_i^+$ and $t_i^-$ in order match $w$, provided that $\psi$
is valid.

In particular, the Boolean structure of $\psi$ enables us to produce
arbitrary quantities of some $t_i$.
%This is the duty of the gadget
%$F_\psi$ which may assign arbitrary values to atomic formulas $t_i$
%occurring in subformulas $\gamma$ of $\psi$ via gadgets $R_\gamma$
%which will be defined below. The gadget $F_\psi$ recursively traverses
%the matrix formula $\psi$ and may invoke some $R_\gamma$ whenever a
%disjunction is processed:
This is the duty of the gadget $F_\psi$ which allows for assigning
arbitrary values to some atomic formulas $t_i$ via gadgets $R_{t_i}$
defined below. The gadget $F_\psi$ recursively traverses the matrix
formula $\psi$ and invokes some $R_\gamma$ whenever a disjunction is
processed and a disjunct $\gamma$ is evaluated to false:
\begin{align*}
  F_{t_i} & \rightarrow \epsilon & F_{\alpha \wedge \beta} & \rightarrow F_\alpha F_\beta\\
  F_{\alpha \vee \beta} & \rightarrow F_\alpha R_\beta & 
  F_{\alpha \vee \beta} & \rightarrow R_\alpha F_\beta\\
  F_{\alpha \vee \beta} & \rightarrow F_\alpha F_\beta
\end{align*}
The definition of $R_\gamma$ for every subformula $\gamma$ of $\psi$
occurring in the syntax tree of $\psi$ is now not difficult: we
traverse $\gamma$ until we reach a leaf $t_i$ of the syntax tree of
$\gamma$ and then allow for generating an arbitrary number of alphabet
symbols $t_i^+$ and $t_i^-$. Let $1\le i\le k$, we define the
following productions:
\begin{align*}
  R_{t_i} & \rightarrow \epsilon & R_{t_i} & \rightarrow R_{t_i}t_i^+
  & R_{t_i} & \rightarrow R_{t_i}t_i^-\\
  R_{\alpha \wedge \beta} & \rightarrow R_\alpha R_\beta &
  R_{\alpha \vee \beta} & \rightarrow R_\alpha R_\beta
\end{align*}

\begin{example}
  Continuing our running example, $H$ includes among others the
  following rules:
  \begin{align*}
    F_{(t_1 \wedge t_2) \vee (t_3 \wedge t_4)} & \rightarrow 
    F_{(t_1 \wedge t_2)} R_{(t_3 \wedge t_4)} &
    F_{(t_1 \wedge t_2) \vee (t_3 \wedge t_4)} & \rightarrow 
    R_{(t_1 \wedge t_2)} F_{(t_3 \wedge t_4)}\\
    F_{(t_1 \wedge t_2) \vee (t_3 \wedge t_4)} & \rightarrow 
    F_{(t_1 \wedge t_2)} F_{(t_3 \wedge t_4)} &
    R_{(t_1\wedge t_2)} & \rightarrow R_{t_1}R_{t_2} \\
    R_{t_1} & \rightarrow R_{t_1}t_1^+ & R_{t_2} & \rightarrow R_{t_2}t_2^+\\
    R_{t_1} & \rightarrow R_{t_1}t_1^-& R_{t_2} & \rightarrow R_{t_2}t_2^-\\
    R_{t_1} & \rightarrow \epsilon & R_{t_2} & \rightarrow \epsilon
  \end{align*}
  In particular,
  \begin{multline*}
    \lan(H,F_\psi) = \{ (n_1^+,n_1^-,n_2^+,n_2^-,0,0,0,0)
    \in \Sigma^\odot : n_1^+,n_1^-,n_2^+,n_2^-\in \N
    \}\cup \\
    \cup \{ (0,0,0,0,n_3^+,n_3^-,n_4^+,n_4^-)
    \in \Sigma^\odot : n_3^+,n_3^-,n_4^+,n_4^-\in \N
    \},
    \end{multline*}
  i.e., $F_\psi$ may produce arbitrary quantities of either $t_1$ and
  $t_2$, or $t_3$ and $t_4$, reflecting the Boolean structure of
  $\psi$.\qedx
\end{example}

Finally, it remains to provide a possibility to increase $w_Y(t_i)$
for those $t_i$ that were not processed by some $R_{t_i}$ in order to
match $w$.  For a good choice of $w_Y$, we certainly have $w(t_i)\ge
w_Y(t_i)$ for those $t_i$. Hence, in order to make $w_Y$ agree with
$w$ on $t_i$, all we have to do to $w_Y$ is to non-deterministically
increment, i.e., produce, $t_i^+$ at least as often as $t_i^-$. This
is the task of the gadget $I$ of $H$, whose production rules are as
follows:
\begin{align*}
  I & \rightarrow \epsilon & I & \rightarrow I t_i^+ t_i^- & I &
  \rightarrow It_i^+ & (1\le i\le k)
\end{align*}
The subsequent lemma, whose proof is immediate, states the properties
of $I$ formally.
\begin{lemma}\label{lem:rhs-i-language}
  $\lan(H,I) = \left\{ (n_1^+,n_1^-,\ldots,n_k^+,n_k^-)\in \Sigma^\odot :
  n_j^+ \ge n_j^-,~ 1\le j\le k \right\}$.
\end{lemma}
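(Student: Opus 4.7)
The plan is a straightforward set equality argument, and indeed the statement is essentially immediate from the shape of the three production rules for $I$. I would split the proof into the two inclusions.

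For the inclusion $\lan(H,I) \subseteq \{\vec{n} : n_j^+ \ge n_j^-\}$, I would proceed by induction on the length of a derivation $I \Rightarrow^* w$ with $w \in \Sigma^\odot$. The base case is $I \to \epsilon$, giving the zero vector, which trivially satisfies all inequalities. For the inductive step, the only derivations that reach a terminal word start with a prefix of rule applications $I \to It_i^+t_i^-$ or $I \to It_i^+$, followed by $I \to \epsilon$. Observe that each application of $I \to It_i^+t_i^-$ increments both $n_i^+$ and $n_i^-$ by one, preserving the invariant $n_j^+ \ge n_j^-$ for all $j$; each application of $I \to It_i^+$ increments only $n_i^+$, which likewise preserves the invariant. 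Hence the terminal word generated satisfies $n_j^+ \ge n_j^-$ for every $1 \le j \le k$.

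For the reverse inclusion, given any target vector $(n_1^+, n_1^-, \ldots, n_k^+, n_k^-)$ with $n_j^+ \ge n_j^-$ for all $j$, I would exhibit an explicit derivation. Starting from $I$, apply for each $1 \le j \le k$ first the rule $I \to It_j^+t_j^-$ exactly $n_j^-$ times (contributing $n_j^-$ copies each of $t_j^+$ and $t_j^-$) and then the rule $I \to It_j^+$ exactly $n_j^+ - n_j^-$ times (which is a nonnegative integer by hypothesis, contributing the remaining $n_j^+ - n_j^-$ copies of $t_j^+$). Finally apply $I \to \epsilon$ to remove the remaining non-terminal. Since $H$ is a commutative grammar, the order of these rule applications is irrelevant, and the resulting terminal word in $\Sigma^\odot$ is exactly the target vector.

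There is no real obstacle here: both directions are mechanical consequences of the three productions, which is why the excerpt remarks that the proof is immediate. The only point worth flagging explicitly is the observation that the rule $I \to It_i^+$ is needed precisely to cover the slack $n_i^+ - n_i^-$ that cannot be produced by the paired rule.
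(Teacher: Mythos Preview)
Your proposal is correct and spells out exactly the argument one would expect. The paper itself does not give a proof at all: it simply states that the lemma's proof is immediate. Your two-inclusion argument (induction on derivation length for one direction, explicit derivation for the other) is the natural unpacking of that claim, so there is nothing to compare.
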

This completes the construction of $H$. Before we prove two lemmas
that establish the correctness of our construction, let us illustrate
the necessity of $I$ with the help of an example.
\begin{example}
  For $\phi$ of our running example, suppose that $x$ is even and hence
  \begin{multline*}
  w = (c+(1+c)\cdot x,\ c +c\cdot x,\ c+c\cdot x,\ c+(1+c)\cdot x,\\   
  c+1+ (1+c)\cdot x,\ c+c\cdot x,\ c+c\cdot x,\ c+1+ (1+c)\cdot x)\in \lan(G).
  \end{multline*}
  For $y=x/2$ , we clearly have
  \[
  w_Y = (2\cdot y ,0,0,2\cdot y,2\cdot y,0,0,2\cdot y)\in \lan(H,Y).
  \]
  The value of $w_Y(t_1)$ and $w_Y(t_2)$ already matches $w(t_1)$ and
  $w(t_2)$, respectively, however $w_Y(t_1^+)\neq w(t_1^+)$, etc. But
  then we find
  \[
  w_I = (2\cdot y+1) \cdot (c,c,c,c,0,0,0,0) \in \lan(H,I).
  \]
  Moreover, we have
  \[
  w_F = (0,0,0,0,1,0,0,1) +  (2\cdot y+1)  \cdot (0,0,0,0,c,c,c,c)
  \in \lan(H,F_\psi),
  \]
  and consequently $w_Y + w_F + w_I = w \in \lan(H)$.\qedx
\end{example}
\begin{lemma}\label{lem:inclusion-valid}
  Suppose $\lan(G)\subseteq \lan(H)$, then $\phi = \forall
  \vec{x}.\exists \vec{y}.\psi(\vec{x},\vec{y})$ is valid.
\end{lemma}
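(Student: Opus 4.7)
The plan is to fix an arbitrary $\vec{x}\in\N^m$ and construct $\vec{y}\in\N^n$ such that $\psi(\vec{x},\vec{y})$ holds. By Lemma~\ref{lem:lhs-language}(i), take the word $w\in\lan(G)$ with $w(t_i)=\sum_{j}(a_{i,j}^+-a_{i,j}^-)x_j+z_i^+-z_i^-$. Since $\lan(G)\subseteq\lan(H)$ by hypothesis, $w\in\lan(H)$, so some derivation in $H$, after its initial production $S_H\to Y\, F_\psi\, I$, decomposes $w$ as $w=w_Y+w_F+w_I$ with $w_Y\in\lan(H,Y)$, $w_F\in\lan(H,F_\psi)$, and $w_I\in\lan(H,I)$. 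Applying Lemma~\ref{lem:rhs-y-language}(ii) to $w_Y$ then yields $\vec{y}\in\N^n$ with $w_Y(t_i)=\sum_{j}(b_{i,j}^+-b_{i,j}^-)y_j$; it remains to verify $\psi(\vec{x},\vec{y})$.

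The numerical engine for the rest of the argument is the inequality $w_F(t_i)\le w(t_i)-w_Y(t_i)$ for every atom $t_i$, which follows by rearranging $w=w_Y+w_F+w_I$ and invoking Lemma~\ref{lem:rhs-i-language}: the latter gives $w_I(t_i^+)\ge w_I(t_i^-)\ge 0$, hence $w_I(t_i)\ge 0$. Consequently, whenever the $F_\psi$-derivation pins $w_F(t_i)=0$, we obtain $w(t_i)\ge w_Y(t_i)$, i.e., $t_i(\vec{x},\vec{y})$ actually holds.

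With this tool, I would prove by structural induction on the subformulas $\gamma$ of $\psi$, guided by the fixed derivation of $F_\psi$, that every $\gamma$ whose $F_\gamma$-nonterminal is reached in the $F$-part of the derivation satisfies $\gamma(\vec{x},\vec{y})$; applied at the root this yields $\psi(\vec{x},\vec{y})$. The atomic base case $\gamma=t_i$ is handled by the numerical engine: the only rule $F_{t_i}\to\epsilon$ contributes nothing to $w_F(t_i^\pm)$, so $w_F(t_i)=0$ and $t_i$ is true. At $\gamma=\alpha\wedge\beta$ the unique rule $F_{\alpha\wedge\beta}\to F_\alpha F_\beta$ forces both children into the $F$-subtree, so the induction hypothesis on both gives $\alpha\wedge\beta$. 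At $\gamma=\alpha\vee\beta$ each of the three rules for $F_{\alpha\vee\beta}$ keeps at least one of $F_\alpha,F_\beta$, so the induction hypothesis on that child is enough.

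The principal obstacle is the atomic base case when an atom $t_i$ occurs at several syntactic positions in $\psi$ with some $F$-processed and others $R$-processed: the $R_{t_i}$-gadgets at the $R$-positions may then contribute freely to $w_F(t_i^\pm)$, and $w_F(t_i)=0$ cannot be inferred from the local rule $F_{t_i}\to\epsilon$ alone. I would handle this by a preprocessing step that renames each syntactic occurrence of every atom of $\psi$ to a fresh atomic label carrying the same coefficient data $a_{i,j}^\pm,z_i^\pm,b_{i,j}^\pm$; this produces an instance semantically equivalent to $\phi$, with only polynomial size blowup, in which each atom occurs exactly once, so no $R_{t_i}$ elsewhere can interfere with an $F$-processed occurrence of the same atom and the base case then goes through cleanly.
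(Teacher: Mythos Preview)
Your argument is correct and follows the same route as the paper: there the structural induction is condensed into defining a truth assignment $\xi$ with $\xi(i)=1$ iff $F_{t_i}$ appears in the derivation of $w_F$, after which one argues directly that $\psi[\xi(1)/t_1,\ldots,\xi(k)/t_k]$ is true and that $w(t_i)\ge w_Y(t_i)$ whenever $\xi(i)=1$ (using exactly your decomposition $w=w_Y+w_F+w_I$ and $w_I(t_i)\ge 0$). The multiple-occurrence obstacle you flag is handled in the paper by the standing convention, stated in the preliminaries, that formulas are represented as syntax trees with no sharing of sub-formulas, so the $t_i$ are the leaves and each occurs exactly once --- your preprocessing step amounts to enforcing this convention.
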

\begin{proof}
  Let $\vec{x}\in \N^m$, we show how to construct $\vec{y}\in \N^n$
  such that $\psi(\vec{x}, \vec{y})$ evaluates to true. By
  Lemma~\ref{lem:lhs-language}(i), there exists $w\in \lan(G)$ such
  that for all $1\le i\le k$,
  \begin{align}\label{eqn:ineq-by-x}
    w(t_i) =
    \sum_{1\le j\le m} (a_{i,j}^+ - a_{i,j}^-) \cdot x_j + z_i^+ - z_i^-,
  \end{align}
  and since $\lan(G)\subseteq \lan(H)$, $w\in \lan(H)$. By definition
  of $H$, there are $w_Y, w_F, w_I\in \Sigma^\odot$ such that
  \begin{itemize}
  \item $w_Y\in \lan(H,Y)$,
    $w_I\in \lan(H,I)$, $w_F\in \lan(H,F_\psi)$; and
  \item $w = w_Y + w_F + w_I$.
  \end{itemize}
  By Lemma~\ref{lem:rhs-y-language}(ii), there exists $\vec{y}\in
  \N^n$ such that for all $1\le i\le k$,
  \begin{align}\label{eqn:ineq-by-y}
    w_Y(t_i) = \sum_{1\le j\le n} (b_{i,j}^+ - b_{i,j}^-) \cdot y_j.
  \end{align}
  We claim that $\vec{y}$ has the desired properties, i.e., that
  $w(t_i)\ge w_Y(t_i)$ for all inequalities necessary to make
  $\psi(\vec{x},\vec{y})$ evaluate to true. To this end, consider the
  derivation tree of $w_F$ and define a mapping $\xi: \{1,\ldots, k\}
  \to \{0, 1\}$ such that $\xi(i)\defeq 0$ if the non-terminal
  $R_{t_i}$ occurs in the derivation tree and $\xi(i)\defeq 1$ if
  $F_{t_i}$ occurs in it. By the construction of $F_\psi$, this
  mapping is well-defined, and moreover it also implies that
  $\psi[\xi(t_1)/t_1,\ldots, \xi(t_k)/t_k]$ evaluates to true. So it
  remains to show that for all $t_i$ such that $\xi(i)=1$, i.e., we
  have
  \[
  w(t_i) = w(t_i^+) - w(t_i^-) \ge w_Y(t_i^+) - w_Y(t_i^-) = w_Y(t_i).
  \]
  Since for all such $i$ we have $w_F(t_i^+)=w_F(t_j^-)=0$, it follows
  that $w(t_i^+)= w_Y(t_i^+) + w_I(t_i^+)$ and $w(t_i^-)= w_Y(t_i^-) +
  w_I(t_i^-)$, hence
\[
  w(t_i^+) - w(t_i^-) = (w_Y(t_i^+) - w_Y(t_i^-)) + (w_I(t_i^+) - w_I(t_i^-)).
\]
  By Lemma~\ref{lem:rhs-i-language}, $ w_I(t_i^+) - w_I(t_i^-) \ge 0$,
  and consequently $w(t_i) \ge w_Y(t_i)$ as required
  by~(\ref{eqn:ineq-by-x}) and~(\ref{eqn:ineq-by-y}).
\end{proof}
The converse direction is slightly more involved. Informally speaking,
on the first sight one might be worried that $H$ produces more $t_i^+$
or $t_i^-$ than $G$, which cannot be ``erased.'' However, the addition
of $c$ in every component for every reduction step made by $G$
together with Proposition~\ref{prop:pottier} allows us to overcome
this obstacle.
\begin{lemma}\label{lem:valid-inclusion}
  Suppose $\phi = \forall \vec{x}.\exists \vec{y}.
  \psi(\vec{x},\vec{y})$ is valid, then $\lan(G)\subseteq \lan(H)$.
\end{lemma}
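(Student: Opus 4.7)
The plan is to take an arbitrary $w \in \lan(G)$ and construct a decomposition $w = w_Y + w_F + w_I$ with $w_Y \in \lan(H, Y)$, $w_F \in \lan(H, F_\psi)$ and $w_I \in \lan(H, I)$; combined with the initial production $S_H \rightarrow Y F_\psi I$, this witnesses $w \in \lan(H)$. First I extract $\vec{x} \in \N^m$ via Lemma~\ref{lem:lhs-language}(ii), so that $w(t_i) = \sum_j (a_{i,j}^+ - a_{i,j}^-) x_j + z_i^+ - z_i^-$ and $w(t_i^\pm) \ge c \cdot (1 + \norm{\vec{x}}_\infty)$ for every $i$. Since $\phi$ is valid, I pick some $\vec{y} \in \N^n$ with $\psi(\vec{x}, \vec{y})$ true, and set $\xi(i) \defeq 1$ iff the atom $t_i$ is satisfied by $(\vec{x}, \vec{y})$. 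Then $\psi[\xi(1)/t_1, \ldots, \xi(k)/t_k]$ evaluates to true, and $(\vec{x}, \vec{y})$ is a solution of the conjunction $D_\xi \defeq \bigwedge_{\xi(i) = 1} t_i$.

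The key technical step is to replace $\vec{y}$ by a witness $\vec{y}^*$ of $D_\xi$ of sufficiently small norm. Applying Proposition~\ref{prop:pottier} to $D_\xi$, viewed as a system in the joint variables $(\vec{x}, \vec{y})$, yields $\eval{D_\xi} = L(B, P)$ with $\norm{B}_\infty, \norm{P}_\infty \le \abs{\phi}^{O(\abs{\phi})}$. I decompose $(\vec{x}, \vec{y}) = \vec{b} + \sum_i \lambda_i \vec{p}_i$ and zero the coefficient of any period $\vec{p}_i$ whose $\vec{x}$-projection vanishes: the resulting point $(\vec{x}, \vec{y}^*)$ still lies in $L(B, P) = \eval{D_\xi}$ and hence still solves $D_\xi$, while the surviving coefficients are bounded by $\norm{\vec{x}}_\infty$ because each of their $\vec{x}$-projections is a nonzero vector in $\N^m$. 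This gives $\norm{\vec{y}^*}_\infty \le \abs{\phi}^{O(\abs{\phi})} \cdot (1 + \norm{\vec{x}}_\infty)$, a bound that the constant $c$ from~(\ref{eqn:constant-c}) is designed to dominate. Consequently, the word $w_Y \in \lan(H, Y)$ obtained from $\vec{y}^*$ via Lemma~\ref{lem:rhs-y-language}(i) satisfies $w_Y(t_i^\pm) \le w(t_i^\pm)$ for every $i$.

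With the nonnegative residues $\delta_i^\pm \defeq w(t_i^\pm) - w_Y(t_i^\pm)$ in hand, I choose a derivation of $F_\psi$ guided by $\xi$: at each disjunction $\alpha \vee \beta$ in $\psi$, I apply the rule $F_\alpha R_\beta$ or $R_\alpha F_\beta$ so that the $F$-subtree covers a disjunct made true by $\xi$ (such a disjunct exists because $\psi[\xi]$ is true). This derivation has $R_{t_i}$ as a leaf exactly for those $i$ with $\xi(i) = 0$, so setting $w_F(t_i^\pm) \defeq \delta_i^\pm$ for such $i$ and $0$ otherwise produces a $w_F \in \lan(H, F_\psi)$. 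For the remaining indices, where $\xi(i) = 1$, satisfaction of $t_i$ by $(\vec{x}, \vec{y}^*)$ gives $w(t_i) \ge w_Y(t_i)$, i.e., $\delta_i^+ \ge \delta_i^-$, and Lemma~\ref{lem:rhs-i-language} lets me realize $w_I \in \lan(H, I)$ with $w_I(t_i^\pm) = \delta_i^\pm$ by using $\delta_i^-$ applications of $I \rightarrow I t_i^+ t_i^-$ and $\delta_i^+ - \delta_i^-$ further applications of $I \rightarrow I t_i^+$. Then $w = w_Y + w_F + w_I \in \lan(H)$, finishing the inclusion. The main obstacle throughout is the norm bound of the middle paragraph: without the per-step inflation by $c$ built into $G$, the witness $\vec{y}$ supplied by validity of $\phi$ could be too large for any corresponding $w_Y$ to fit componentwise below $w$, and it is exactly the linear dependence on $\norm{\vec{x}}_\infty$ granted by Proposition~\ref{prop:pottier} that makes the chosen $c$ sufficient.
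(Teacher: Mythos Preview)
Your proposal is correct and follows the same strategy as the paper: extract $\vec{x}$ from $w$ via Lemma~\ref{lem:lhs-language}(ii), apply Proposition~\ref{prop:pottier} to the system of satisfied atoms to obtain a small witness $\vec{y}^\dagger$ (your $\vec{y}^*$) whose norm grows only linearly in $\norm{\vec{x}}_\infty$, and then assemble $w_Y$, $w_F$, $w_I$. Your explicit step of zeroing the coefficients of periods with vanishing $\vec{x}$-projection is in fact a cleaner justification of the coefficient bound than the paper's inequality~(\ref{eqn:lambda-inequality}), which tacitly assumes each $\lambda_i \le \norm{\vec{x}^*}_\infty$ without arguing it.
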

\begin{proof}
  Let $w\in \lan(G)$, by Lemma~\ref{lem:lhs-language}(ii) there exists
  $\vec{x}^*\in \N^m$ such that (\ref{eqn:ineq-by-x-1}),
  (\ref{eqn:ineq-by-x-2}) and~(\ref{eqn:ineq-by-x-3}) hold. By
  assumption, there is $\vec{y}^*\in \N^n$ such that
  $\psi(\vec{x}^*,\vec{y}^*)$ holds. Hence, there is
  $\xi:\{1,\ldots,k\} \to \{0,1\}$ such that for all $i$ where
  $\xi(i)=1$,
%
  %% \[
  %% \sum_{1\le j\le m} (a_{i,j}^+ - a_{i,j}^-) \cdot x_j^* + z_i^+ - z_i^- 
  %% \ge \sum_{1\le j\le n} (b_{i,j}^+ - b_{i,j}^-) \cdot y_j^*
  %% \]
  \[
  \sum_{1\le j\le m} a_{i,j} \cdot x_j^* + z_i
  \ge \sum_{1\le j\le n} b_{i,j} \cdot y_j^*
  \]
  and $\psi[\xi(1)/t_1, \ldots, \xi(k)/t_k]$ evaluates to true. With
  no loss of generality, write $\{ i: \xi(i) = 1\} = \{1,\ldots, h\}$
  for some $1\le h\le k$. Consider the system $D : A \cdot
  (\vec{x},\vec{y})\ge \vec{z}$ of linear Diophantine inequalities
  over the unknowns $\vec{x}$ and $\vec{y}$, where
  \begin{align*}
    A & \defeq \begin{pmatrix}
      a_{1,1} & \cdots & a_{1,m} & -b_{1,1} & \cdots & -b_{1,n}\\
      \vdots & \ddots & \vdots & \vdots & \ddots & \vdots\\
      a_{h,1} & \cdots & a_{h,m} & -b_{h,1} & \cdots & -b_{h,n}\\
      %% (a_{1,1}^+  - a_{1,1}^-) & \cdots & (a_{1,m}^+ - a_{1,m}^-) & (-b_{1,1}^+ +
      %% b_{1,1}^-) & \cdots & (-b_{1,n}^+ +  b_{1,n}^-)\\
      %% \vdots & \ddots & \vdots & \vdots & \ddots &
      %% \vdots\\
      %% (a_{h,1}^+ - a_{h,1}^-) & \cdots & (a_{h,m}^+ - a_{h,m}^-) & ( -b_{h,1}^+ +
      %% b_{h,1}^-) & \cdots & (-b_{h,n}^+ + b_{h,n}^-)
      %% 1 & -1 & \cdots & 0 & 0 & 0 & 0 & \cdots & 0 & 0\\
      %% \vdots & \vdots & \ddots & \vdots & \vdots & \vdots & \vdots &
      %% \ddots & \vdots & \vdots\\
      %% 0 & 0 & \cdots & -1 & 1 & 0 & 0 & \cdots & 0 & 0\\
      %% 0 & 0 & \cdots & 0 & 0 & -1 & 1 & \cdots & 0 & 0\\
      %% \vdots & \vdots & \ddots & \vdots & \vdots & \vdots & \vdots &
      %% \ddots & \vdots & \vdots\\
      %% 0 & 0 & \cdots & 0 & 0 & 0 & 0 & \cdots & 1 & -1
    \end{pmatrix} & 
    \vec{z} & \defeq   
    \begin{pmatrix}
      -z_1\\
      \vdots\\
      -z_h
    \end{pmatrix}.
  \end{align*}
  By assumption, $D$ has a non-empty solution set. We have that $A$ is
  a $h\times (m+n)$ matrix with $\norm{M}_{1,\infty}\le
  \norm{\psi}_{1,\infty}$ and $\norm{\vec{z}}_\infty \le
  \norm{\psi}_{\infty}$. By Proposition~\ref{prop:pottier}, there are
  $B,P\subseteq \N^{m+n}$ such that $\eval{D}= B +
  \cone(P)$. Consequently,
\[
  \vec{x}^* = \proj_{[1,m]}(\vec{b} + \lambda_1 \cdot \vec{p}_1 + 
  \cdots + \lambda_\ell \cdot \vec{p}_\ell)
\]
  for some $\vec{b}\in B$, $\vec{p}_i\in P$ and $\lambda_i\in \N$. In
  particular, since $\abs{P}\le \binom{m+n}{h}\le 2^{\abs{\phi}}$ we
  have
  \begin{align}\label{eqn:lambda-inequality}
    0\le \sum_{1\le i\le \ell}\lambda_i\le
    \norm{\vec{x}^*}_\infty\cdot \ell\le \norm{\vec{x}^*}_\infty \cdot
    2^{\abs{\phi}}.
  \end{align}
  Now let
  \[
  \vec{y}^\dagger \defeq \proj_{[m+1,m+n]}(\vec{b} + \lambda_1 \cdot \vec{p}_1 + 
  \cdots + \lambda_\ell \cdot \vec{p}_\ell).
  \]
  We have $(\vec{x}^*,\vec{y}^\dagger)$ is a solution of $D$ and
  henceforth $\psi[\vec{x}^*/\vec{x}, \vec{y}^\dagger/\vec{y}]$
  evaluates to true. Moreover,
  \begin{align*}
    \norm{\vec{y}^\dagger}_\infty & \le \norm{\vec{b}}_\infty +
    \norm{\lambda_1 \cdot \vec{p}_1 + \cdots + \lambda_\ell \cdot
      \vec{p}_\ell}_\infty\\
    & \le \norm{B}_\infty + \sum_{1\le i\le \ell} \lambda_i \cdot 
    \norm{P}_\infty  \\
    & \le \norm{B}_\infty + \norm{\vec{x}^*}_{\infty} \cdot 2^{\abs{\phi}}\cdot \norm{P}_\infty
    & (\text{by}~(\ref{eqn:lambda-inequality})) \\
    & \le \left(1 + 
    \norm{\vec{x}^*}_\infty\cdot 2^{\abs{\phi}}\right) \cdot {(\norm{A}_{1,\infty} + \norm{\vec{z}}_\infty + 2)}^{h+m+n} & (\text{by Prop.}~\ref{prop:pottier})\\
    & \le \left(1 + \norm{\vec{x}^*}_\infty\cdot 2^{\abs{\phi}}\right) \cdot {((m+n+1)\cdot \norm{\phi}_\infty + 2)}^{k+m+n}\\
    & \le \left(1 + \norm{\vec{x}^*}_\infty\cdot 2^{\abs{\phi}}\right) \cdot \abs{\phi}^{3\cdot \abs{\phi}}\\
    & \le \left(1 + \norm{\vec{x}^*}_\infty\right)\cdot  \abs{\phi}^{3\cdot \abs{\phi}}\cdot 2^{\abs{\phi}}
    & (\text{by}~(\ref{eqn:formula-inequalities}))\\
    & \le (1 + \norm{\vec{x}^*}_\infty) \cdot \frac{c}{\abs{\phi}^2} 
    & (\text{by}~(\ref{eqn:constant-c}))
  \end{align*}
  Combining the estimation of $\norm{\vec{y}^\dagger}_\infty$
  with~(\ref{eqn:ineq-by-x-2}) and~(\ref{eqn:ineq-by-x-3}) of
  Lemma~\ref{lem:lhs-language}, for every $1\le i\le k$ we obtain
  \begin{align}\label{eqn:w-ti-inequality}
  w(t_i^+), w(t_i^-) \ge c \cdot (1 + \norm{\vec{x}^*}_\infty) \ge
  \norm{\vec{y}^\dagger}_\infty \cdot \abs{\phi}^2 \ge
  \norm{\vec{y}^\dagger}_\infty \cdot \norm{\phi}_\infty \cdot
  \abs{\phi}.
  \end{align}
  By Lemma~\ref{lem:rhs-y-language}(i) there is $w_Y\in \lan(H,Y)$
  such that~(\ref{eqn:w-ti-inequality}) yields
  \begin{align*}
    w(t_i^+) & \ge \sum_{1\le j\le n}  \norm{\vec{y}^\dagger}_\infty \cdot
    \norm{\phi}_\infty 
    \ge \sum_{1\le j\le n}b_{i,j}^+ \cdot y_j^\dagger  = w_Y(t_i^+)\\
    w(t_i^-) & \ge \sum_{1\le j\le n}
    \norm{\vec{y}^\dagger}_\infty \cdot  \norm{\phi}_\infty 
    \ge \sum_{1\le j\le n}b_{i,j}^- \cdot y_j^\dagger = w_Y(t_i^-).
  \end{align*}
  Moreover, the construction of $F_\psi$ is such that
  \[
  \left\{ w_F \in \Sigma^\odot : w_F(t_i^+)=w_F(t_i^-)=0,~\xi(i) = 1,~1\le i\le k 
  \right\} \subseteq \lan(H,F_\psi).
  \]
  Hence, we can find some $w_F\in \lan(H,F_\psi)$ which allows us to
  adjust those $t_i$ for which $\xi(i)=0$. More formally, for $1\le
  i\le k$ such that $\xi(i)=0$,
  \[
  (w_Y+w_F)(t_i^+)=w(t_i^+) \text{ and }
  (w_Y+w_F)(t_i^-)=w(t_i^-).
  \]
  On the hand, for all $1\le i\le k$ such that $\xi(i)=1$,
  \[
  (w_Y+w_F)(t_i^+)=w_Y(t_i^+) \text{ and }(w_Y+w_F)(t_i^+)=w_Y(t_i^+),
  \]
  i.e., those $t_i$ remain untouched by $w_F$.

  Consequently, it remains to show that there is a suitable $w_I\in
  \lan(H,I)$ such that we can adjust those $t_i$ which were left
  untouched by $w_F$ above.
%%  Formally, 
%% %
%%   \begin{itemize}
%%   \item $w(t_i^+)= (w_Y + w_{I})(t_i^+)$ and $w(t_i^-)= (w_Y +
%%     w_{I,i})(t_i^-)$ for all $1\le i\le k$ such that $\xi(i)=1$; and
%%   \item $w_I(t_i^+)=w_I(t_i^-)=0$ for all $1\le i\le k$ such that
%%     $\xi(i)=0$.
%%   \end{itemize}
%% %
  For all $1\le i\le k$ such that $\xi(i)=1$, since $\vec{y}^\dagger$
  is a solution of $D$, we have
  \begin{align*}
         & w(t_i) = w(t_i^+) - w(t_i^-) \ge w_Y(t_i^+) - w_Y(t_i^-) = w_Y(t_i)\\
    \iff & w(t_i^+) - w_Y(t_i^+) \ge w(t_i^-) - w_Y(t_i^-)\\
    \iff & \text{there are }m_i,n_i\in \N \text{ such that } w(t_i^+) = w_Y(t_i^+)
    + m_i + n_i \text{ and }\\ & w(t_i^-) = w_Y(t_i^-) + m_i.
  \end{align*}
  But then Lemma~\ref{lem:rhs-i-language} yields the required
  $w_{I}\in \lan(H,I)$ such that $w_{I}(t_i^+)=m_i+n_i$,
  $w_{I}(t_i^-)=m_i$, and $w_{I}(t_j^+)=w_{I}(t_j^+)=0$ for all $j$
  such that $\xi(j)=0$.

  Summing up, we have $w=w_Y+w_I+w_F$, and hence $w\in \lan(H)$ as
  required.
\end{proof}

Lemmas~\ref{lem:inclusion-valid} and~\ref{lem:valid-inclusion}
together with Proposition~\ref{prop:pa-pi-2-complexity} yield the
\coNEXP-lower bound of Theorems~\ref{thm:main}
and~\ref{thm:main-expsens} of the language inclusion problem for
context-free and exponent-sensitive grammars. In order to show
hardness of the equivalence problem, we merge $H$ into $G$, i.e.,
define
\begin{align*}%\label{eqn:redefined-g}
G^e\defeq (N_G\cup N_H \cup \{S\}, \Sigma, S, P_G\cup P_H \cup \{ S \rightarrow
S_G, S\rightarrow S_H \}).
\end{align*}
It is now clear that $\phi$ is valid iff $\lan(G^e)=\lan(H)$. Finally,
if we, in addition, redefine $H$ as
\begin{align*}%\label{eqn:redefined-h}
  H^e \defeq (\{S_G, X\} \cup N_H \cup \{ S\}, \Sigma, P_H \cup \{ S\rightarrow S_G, 
  S\rightarrow XS_H, X\rightarrow \epsilon\})
\end{align*}
then $G^e$ and $H^e$ have the same set of non-terminals $N\defeq
N_G\cup N_H \cup \{S\} = \{ S_G, X \} \cup N_H \cup \{ S\}$, and even
a stronger statement holds:
\begin{align}\label{eqn:bpp-reduction}
  \phi~\text{ is valid} \iff \reach(G^e) = \reach(H^e).
\end{align}
%

%%%%%%%%%%%%%%%%%%%%%%%%%%%%%%%%%%%%%%%%%%%%%%%%%%%%%%%%%%%%%%%%%%%
%%%%%%%%%%%%%%%%%%%%%%%%%%%%%%%%%%%%%%%%%%%%%%%%%%%%%%%%%%%%%%%%%%%
\subsection{Hardness for Regular Commutative Grammars}
It remains to show how the reduction developed so far can be adapted
in order to prove \coNEXP-hardness of the equivalence problem for
regular commutative grammars. As constructed above, neither $G$ nor
$H$ are regular. In this section, we show how to obtain regular
commutative grammars $G^r$ and $H^r$ from $G$ and $H$ such that
$\lan(G^r)\subseteq \lan(H^r)$ iff $\lan(G) \subseteq \lan(H)$.

It is actually not difficult to see that $H$ can be made regular. By
the construction of $H$, both gadgets starting in $Y$ and $I$ are
regular, but $F_\psi$ is not, and also the initial production
$S_H\rightarrow YF_\psi I$ is not regular. The latter can be fixed by
additionally adding productions $Y\rightarrow F_\psi$ and
$F_\psi\rightarrow I$ to the set of productions $P$, and replacing $S_H\rightarrow YF_\psi I$
with $S_H\rightarrow Y$.  It thus remains to make $F_\psi$
regular. The non-regularity of the latter is due to the fact that we
use branching provided by context-free commutative grammars in order
to simulate the Boolean structure of $\psi$. However, as we show now,
it is possible to serialise $F_\psi$.

As a first step, we discuss the serialisation of $R_\gamma$ for all
subformulas $\gamma$ occurring in the syntax-tree of $\psi$. Recall
that the task of $R_\gamma$ is to generate arbitrary amounts of
alphabet symbols $t_i^+$ and $t_i^-$ for all $t_i$ occurring in
$\gamma$. We define the set $T_\gamma$ collecting all $t_i^+$ and
$t_i^-$ corresponding to the inequalities appearing in $\gamma$:
\begin{align*}
  T_\gamma & \defeq 
  \begin{cases}
    \{ t_i^+, t_i^- \} & \text{if } \gamma = t_i\\
    T_\alpha \cup T_\beta & \text{if } \gamma = \alpha \wedge \beta \text{ or }
    \gamma = \alpha \vee \beta.
  \end{cases}
\end{align*}
We can now redefine $R_\gamma$ to be the regular grammar corresponding
to the following NFA, for which clearly $\lan(R_\gamma) =
T_\gamma^\odot$ holds:
\begin{center}
  \begin{tikzpicture}[node distance=2cm]    
    %% \node (P0) [state] at (0,0) {$ $};
    %% \node (Q0) [right of=P0, state, accepting]{$ $};
    %% \path[->]
    %% (P0) edge node[above] {$R_{\gamma}$} (Q0);
    
    %% \node (IM) [right of=Q0] {$\implies$};
    
    \node (P1) [state] {$ $};
    \node (Q1) [right of=P1, state] {$ $};
    \node (R1) [right of=Q1, state, accepting] {$ $}; 
    \path[->]
    (P1) edge node[above] {$\epsilon$} (Q1)
    (Q1) edge node[above] {$\epsilon$} (R1)
    (Q1) edge [loop above] node[above right, xshift=-0.55cm] {\scriptsize{$t_i^+\in T_\gamma$}} (Q1)
    (Q1) edge [loop below] node[below right, xshift=-0.55cm] {\scriptsize{$t_i^-\in T_\gamma$}} (Q1);
  \end{tikzpicture}
\end{center}

Next, we describe an inductive procedure that when completed yields an
NFA that corresponds to a regular grammar whose language is equivalent
to $\lan(H,F_\psi)$. The procedure constructs in every iteration an
NFA with a unique incoming and outgoing state and labels every
transition with the gadget that should replace this transition in the
next iteration, or with an alphabet letter if no more replacement is
required. The initial such NFA is the following:
\begin{center}
  \begin{tikzpicture}[node distance=2cm]      
    \node (P0) [state] at (0,0) {$ $}; \node (Q0) [right of=P0, state,
      accepting]{$ $};
    
    \path[->] (P0) edge node[above] {$F_{\psi}$} (Q0);
  \end{tikzpicture}
\end{center}
In the induction step, the rewriting of a transition labelled with
$F_\gamma$ depends on the logical connective. A conjunction $\gamma =
\alpha \wedge \beta$ is replaced by sequential composition:
\begin{center}
  \begin{tikzpicture}[node distance=2cm]    
    \node (P0) [state] at (0,0) {$ $};
    \node (Q0) [right of=P0, state, accepting]{$ $};
    \path[->]
    (P0) edge node[above] {$F_{\alpha \land \beta}$} (Q0);
    
    \node (IM) [right of=Q0] {$\implies$};
    
    \node (P1) [right of=IM, state] {$ $};
    \node (Q1) [right of=P1, state] {$ $};
    \node (R1) [right of=Q1, state, accepting] {$ $}; 
    \path[->]
    (P1) edge node[above] {$F_{\alpha}$} (Q1)
    (Q1) edge node[above] {$F_{\beta}$} (R1);
  \end{tikzpicture}
\end{center}
Thus, the outgoing state of the gadget $F_\alpha$ connects to the
incoming state of the gadget $F_\beta$. In the case of a disjunction
$\gamma = \alpha \vee \beta$, the transition is rewritten into three paths that, informally speaking, correspond to possible truth
assignments to the subformulas $F_\alpha$ and $F_\beta$. If the
inequalities appearing in $\alpha$ are allowed to receive arbitrary
values, the transition labelled with $F_{\alpha \vee \beta}$ is
replaced by the sequential composition of two gadgets, $R_\alpha$ and
$F_\beta$; the other cases are treated in the same way:
\begin{center}
  \begin{tikzpicture}[node distance=2cm]    
    \node (P0) [state] at (0,0) {$ $};
    \node (Q0) [right of=P0, state, accepting]{$ $};
    \path[->]
    (P0) edge node[above] {$F_{\alpha \lor \beta}$} (Q0);
    
    \node (IM) [right of=Q0] {$\implies$};
    
    \node (P1) [right of=IM, state] {$ $};
    \node (Q2) [right of=P1] {};
    \node (R1) [right of=Q2, state, accepting] {$ $};
    \node (Q1) [above of=Q2, yshift=-1cm, state] {$ $};
    \node (Q3) [below of=Q2, yshift=1cm, state] {$ $};
    
    \path[->]
    (P1) edge [bend left] node[above, yshift=0.03cm] {$F_{\alpha}$} (Q1)
    (Q1) edge [bend left] node[above] {$R_{\beta}$} (R1)
    %(P1) edge node[above, yshift=-0.03cm] {$R_{\alpha}$} (Q2)
    %(Q2) edge node[above, yshift=-0.05cm] {$F_{\beta}$} (R1)
    (Q1) edge node[left] {$\epsilon$} (Q3)
    (P1) edge [bend right]node[above, yshift=0.03cm] {$R_{\alpha}$} (Q3)
    (Q3) edge [bend right]node[above] {$F_{\beta}$} (R1);
  \end{tikzpicture}
\end{center}
Once some $F_{t_i}$ is reached, it gets replaced by the empty word:
\begin{center}
  \begin{tikzpicture}[node distance=2cm]    
    \node (P0) [state] at (0,0) {$ $};
    \node (Q0) [right of=P0, state, accepting]{$ $};
    \path[->]
    (P0) edge node[above] {$F_{t_i}$} (Q0);
    
    \node (IM) [right of=Q0] {$\implies$};
    
    \node (P1) [right of=IM, state] {$ $};
    \node (Q1) [right of=P1, state, accepting] {$ $};
   \path[->]
   (P1) edge node[above] {$\epsilon$} (Q1) ;  
  \end{tikzpicture}
\end{center}
Moreover, if $R_\gamma$ is reached it gets replaced by the gadget
described earlier. From this construction, it is clear that the
regular grammar that can be obtained from the resulting NFA exactly
generates $\lan(H,F_\psi)$, and that in the following we may assume
that $H$ is regular.

\begin{figure}[t]
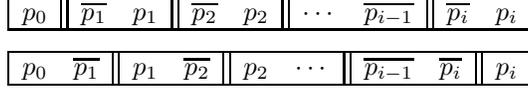

  \begin{center}
    \begin{tabular}{|c||cc||cc||cc||cc|}
      \hline
%      1 & 2 & 3 & 4 & 5 & 6 & 7 & 8\\
      $p_0$ & $\overline{p_1}$ & $p_1$ & $\overline{p_2}$ & $p_2$ & $\cdots$
      & $\overline{p_{i-1}}$ & $\overline{p_i}$ & $p_i$\\
      \hline
    \end{tabular}\\\vspace*{0.25cm}
    \begin{tabular}{|cc||cc||cc||cc||c|}
      \hline
      $p_0$ & $\overline{p_1}$ & $p_1$ & $\overline{p_2}$ & $p_2$ & $\cdots$
      & $\overline{p_{i-1}}$ & $\overline{p_i}$ & $p_i$\\
      \hline
    \end{tabular}
  \end{center}
  \caption{Illustration of the pairing of alphabet symbols in $C_\ell$
    (above) and $C_r$ (below).}\label{fig:pairing}
\end{figure}

We now turn towards showing how $G$ can be made regular. Even though
the structure of $G$ already appears to be regular, note that we use
the construction of Remark~\ref{rmk:binary-encoding} in order to
encode the constant $c$ in binary. This is not possible in the case of
regular commutative grammar, at least not in an obvious way. However,
we can use an interplay between $G$ and $H$ in order to, informally
speaking, force $G$ to produce alphabet symbols in exponential
quantities. To this end, we introduce additional alphabet symbols and
define $\Gamma_i\defeq \{
p_0,\overline{p_1},p_1,\ldots,\overline{p_i}, p_i \}$ for every $i\in
\N$. Before formally providing the construction in
Lemma~\ref{lem:exponentiating} below, let us discuss how we can
achieve our goal on an informal level. Suppose we wish to produce a
word $w \in \Gamma_i^\odot$ such that $w(p_i)= 2^i \cdot p_0$. One way
to obtain a language that contains such a word is to pair alphabet
symbols $\overline{p_j}$ and $p_{j}$ and to produce two symbols $p_j$
every time some $\overline{p_j}$ is non-deterministically
produced. The pairing is illustrated in the top of
Figure~\ref{fig:pairing}, and, more formally, such a language can be
generated by the following regular grammar: $C_\ell\defeq
(\{S_\ell\},\Gamma_i,S_\ell,P_\ell)$, where
\begin{align*}
  S_\ell & \rightarrow \epsilon\\
  S_\ell & \rightarrow S_\ell p_0\\
  S_\ell & \rightarrow S_\ell \overline{p_j}p_jp_j  & (0\le j\le i).
\end{align*}
Clearly, we can find some $w\in \lan(C_\ell)$ such that
\[
w(p_i)=2\cdot w(\overline{p_i}) = 2 \cdot w(p_{i-1}) = 2^2 \cdot
w(\overline{p_{i-1}}) = \cdots = 2^i \cdot w(p_0).
\]
Such a $w$ implicitly requires another pairing, namely that
$w(\overline{p_{j+1}})=w(p_{j})$ for all $0\le j<i$, which is
illustrated in the bottom of Figure~\ref{fig:pairing}. If we can rule
out all words of $\lan(C_\ell)$ that violate this pairing, we obtain a
language containing the desired $w\in \Gamma_i^\odot$ such that
$w(p_i)=2^i\cdot w(p_0)$. This is the task of the regular grammar
$C_r$ constructed in the following lemma.
\begin{lemma}\label{lem:exponentiating}
  For every $i\in \N$, there are logarithmic-space computable regular
  commutative grammars $C_\ell$ and $C_r$ such that:
  \begin{enumerate}[(i)]
  \item $\lan(C_\ell) = \{ w\in \Gamma_i^\odot : w(p_j) = 2\cdot w(\overline{p_j})
    $ for every $1\le j\le i \}$; and
  \item $\lan(C_r) = \{ w\in \Gamma_i^\odot : w(p_{j}) \neq
    w(\overline{p_{j+1}})$ for some $0\le j<i \}$.
%  \item $\lan(\overline{C_r}) = \overline{\lan(C_r)}$.
  \end{enumerate}
  In particular, for every $v\in \lan(C_\ell)\setminus \lan(C_r)$,
  $v(p_i) = 2^i\cdot v(p_0)$.
\end{lemma}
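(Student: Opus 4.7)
The plan is to exhibit $C_\ell$ essentially as already sketched in the text and then build $C_r$ as a disjunctive union of regular sub-grammars, each witnessing the failure of one pairing $w(p_j) = w(\overline{p_{j+1}})$ in one of the two possible directions.

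For part (i), I take $C_\ell$ exactly as in the informal sketch: non-terminal $S_\ell$ with productions $S_\ell \rightarrow \epsilon$, $S_\ell \rightarrow S_\ell\, p_0$, and $S_\ell \rightarrow S_\ell\, \overline{p_j}\, p_j\, p_j$ for $1 \le j \le i$. The containment $\lan(C_\ell) \subseteq \{w : w(p_j) = 2 w(\overline{p_j}),\ 1 \le j \le i\}$ is immediate by induction on derivations, since the only production touching $\overline{p_j}$ or $p_j$ ($j \ge 1$) adds one $\overline{p_j}$ and two $p_j$. Conversely, given a word $w$ with $w(p_j) = 2 w(\overline{p_j})$, one applies the $j$-th loop rule exactly $w(\overline{p_j})$ times for each $j \ge 1$ and the $p_0$-rule exactly $w(p_0)$ times.

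For part (ii), the condition ``$w(p_j) \ne w(\overline{p_{j+1}})$ for some $0 \le j < i$'' is a finite disjunction of $2i$ conditions of the form $w(p_j) > w(\overline{p_{j+1}})$ or $w(p_j) < w(\overline{p_{j+1}})$. I exploit non-determinism at the axiom: $C_r$ has a fresh axiom $S_r$ with productions $S_r \rightarrow S_r^{(j,>)}$ and $S_r \rightarrow S_r^{(j,<)}$ for every $0 \le j < i$. For each $j$ and direction, the sub-grammar rooted at $S_r^{(j,d)}$ produces letters freely from $\Gamma_i \setminus \{p_j, \overline{p_{j+1}}\}$ via self-loops $S_r^{(j,d)} \rightarrow S_r^{(j,d)} \alpha$, pairs the remaining two letters via a loop $S_r^{(j,d)} \rightarrow S_r^{(j,d)}\, p_j\, \overline{p_{j+1}}$, and finally forces a strict surplus via a terminating rule $S_r^{(j,>)} \rightarrow p_j$ or $S_r^{(j,<)} \rightarrow \overline{p_{j+1}}$. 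A routine check shows $\lan(C_r, S_r^{(j,>)}) = \{w : w(p_j) > w(\overline{p_{j+1}})\}$ and symmetrically for $<$; taking the union over all $j$ and directions yields the desired language. All productions are regular, the total size is polynomial in $i$, and both grammars are clearly computable in logarithmic space.

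For the concluding claim, suppose $v \in \lan(C_\ell) \setminus \lan(C_r)$. From (i) we have $v(p_j) = 2 v(\overline{p_j})$ for every $1 \le j \le i$, and from $v \notin \lan(C_r)$ we get $v(p_j) = v(\overline{p_{j+1}})$ for every $0 \le j < i$. A straightforward induction on $j$ then gives $v(p_{j+1}) = 2 v(\overline{p_{j+1}}) = 2 v(p_j)$, whence $v(p_i) = 2^i \cdot v(p_0)$. The only subtle point in the whole argument is realising that a disjunction of equalities fails for $w$ iff a disjunction of strict inequalities holds for $w$, which is exactly what lets us stay inside the regular class when defining $C_r$.
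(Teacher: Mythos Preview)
Your approach for both parts mirrors the paper's. For part~(i) you reproduce exactly the grammar $C_\ell$ that the paper introduces just before the lemma, and for part~(ii) you, like the paper, split the disequality into $2i$ strict inequalities and handle each with a dedicated branch consisting of free loops on the irrelevant letters, a pairing loop on $p_j,\overline{p_{j+1}}$, and a device to force a surplus of the appropriate letter. The concluding chain $v(p_{j+1}) = 2 v(\overline{p_{j+1}}) = 2 v(p_j)$ is also identical to the paper's.

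There is, however, a small concrete gap in your construction of $C_r$. As you describe it, the only productions touching $p_j$ in the branch $S_r^{(j,>)}$ are the pairing loop $S_r^{(j,>)} \rightarrow S_r^{(j,>)}\, p_j\, \overline{p_{j+1}}$ and the terminating rule $S_r^{(j,>)} \rightarrow p_j$. Since the terminating rule is applied exactly once in any derivation, every word derived from $S_r^{(j,>)}$ satisfies $w(p_j) = w(\overline{p_{j+1}}) + 1$, not merely $w(p_j) > w(\overline{p_{j+1}})$; your ``routine check'' would in fact fail. The fix is immediate: add a self-loop $S_r^{(j,>)} \rightarrow S_r^{(j,>)}\, p_j$ (and symmetrically $S_r^{(j,<)} \rightarrow S_r^{(j,<)}\, \overline{p_{j+1}}$) so that an arbitrary positive surplus can be generated. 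This is precisely what the paper does via its productions $N_j \rightarrow N_j\, p_j$ and $\overline{N_{j+1}} \rightarrow \overline{N_{j+1}}\,\overline{p_{j+1}}$. With that one-line correction your argument is complete and essentially coincides with the paper's.
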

\begin{proof}
  Regarding Part~(i), clearly $C_\ell$ as defined above has the
  desired properties. In order to prove Part~(ii), we define $C_r
  \defeq (N_r, \Gamma_i, S_r, P_r)$, where $N_r \defeq \{ S_r\} \cup
  \{ N_j, \overline{N_{j+1}} : 0\le j < i \}$ and
  \begin{align*}
    S_r & \rightarrow S_r p_i\\
    S_r & \rightarrow N_j p_j & S_r & \rightarrow \overline{N_{j+1}}~\overline{p_{j+1}}
    & (0\le j < i)\\
    N_j & \rightarrow N_jp_j & \overline{N_{j+1}} & \rightarrow \overline{N_{j+1}}
    \overline{p_{j+1}} & (0\le j< i)\\
    N_j & \rightarrow N_jp_j\overline{p_{j+1}} & \overline{N_{j+1}} & \rightarrow 
    \overline{N_{j+1}} p_{j}\overline{p_{j+1}} & (0\le j<i)\\
    N_j & \rightarrow N_jp_g & \overline{N_j} & \rightarrow \overline{N_j}p_g & 
    (0\le g,j< i, g \neq j)\\
    N_j & \rightarrow N_j\overline{p_{g+1}} & \overline{N_j} & \rightarrow 
    \overline{N_j} \overline{p_{g+1}} & (0\le g,j< i, g \neq j)\\
    N_j & \rightarrow \epsilon & \overline{N_j} & \rightarrow \epsilon &
    (0\le j<i).
  \end{align*}
  Informally speaking, after non-deterministically producing alphabet
  symbols $p_i$ starting from $S_r$, we can then non-deterministically
  choose an index $0\le j<i$ such that either $p_j >
  \overline{p_{j+1}}$ (when switching to $N_j$) or $\overline{p_{j+1}}
  > p_j$ (when switching to $\overline{N_j}$), and for any choice of
  $j$ all other alphabet symbols $p_g$ and $\overline{p_{g+1}}$ such
  that $g\neq j$ can be produced in arbitrary quantities. It is easily
  checked that $\lan(C_r)$ has the desired properties. Now, we have
  \begin{align*}
    & v\in \lan(C_\ell)\setminus \lan(C_r)\\
    \iff & v\in \lan(C_\ell) \cap  \overline{\lan(C_r)}\\
    \iff & v\in \lan(C_\ell) \cap \{ w \in \Gamma_i^\odot : w(p_j) = 
    w(\overline{p_{j+1}}) \text{ for all } 0 \le j< i \}\\
    \iff & v\in \{ w\in \Gamma_i^\odot : w(p_{j+1}) = 2\cdot w(\overline{p_{j+1}}) 
    \text { and }
    w(p_j) =  w(\overline{p_{j+1}}) \text{ for all } 0 \le j< i\}\\
    \implies &  v\in \{ w \in \Gamma_i^\odot : w(p_{j+1}) = 2\cdot w(p_j)
   \text{ for all } 0\le j<i \}\\
   \implies & v(p_i) = 2^i\cdot v(p_0).
  \end{align*}\end{proof}
%
%%   Finally, regarding~(iii) we define $\overline{C_r}\defeq (\{\overline{S_r}\},
%%   \Gamma_i, \overline{S_r}, \overline{P_r})$, where
%% %
%%   \begin{align*}
%%   \overline{S_r} & \rightarrow \epsilon\\
%%   \overline{S_r} & \rightarrow \overline{S_r} p_i \\
%%   \overline{S_r} & \rightarrow \overline{S_r} p_j \overline{p_{j+1}} & (0\le j<i).
%%   \end{align*}

Let $\Sigma=\{ t_1^+,t_1^-,\ldots, t_k^+, t_k^-\}$ be as defined in
the previous section. The following corollary is an immediate
consequence of Lemma~\ref{lem:exponentiating} and enables us to
construct an exponential number of $t_i^+$ and $t_i^-$.
\begin{corollary}\label{cor:extended-exponentiating}
  For every $i\in \N$, there are logarithmic-space computable regular
  commutative grammars $C_\ell^\Sigma$ and $C_r^\Sigma$ over $\Sigma \cup
  \Gamma_i$ such that
  \begin{enumerate}[(i)]
  \item $\lan(C_\ell^\Sigma) = \lan(C_\ell) \cdot \Sigma^\odot \cap \{ w \in
    {(\Sigma\cup \Gamma_i)}^\odot : w(t_j^+) = w(t_j^-) = w(p_i) \}$;
    and
  \item $\lan(C_r^\Sigma) = \lan(C_r) \cdot \Sigma^\odot$.
  \end{enumerate}
  where $C_\ell$ and $C_r$ are defined as in Lemma~\ref{lem:exponentiating}.
\end{corollary}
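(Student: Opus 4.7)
The plan is to construct $C_\ell^\Sigma$ and $C_r^\Sigma$ by minimal modifications of the grammars $C_\ell$ and $C_r$ from Lemma~\ref{lem:exponentiating}, keeping both constructions logspace computable.

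Part (ii) is the easier of the two. Since I need to allow arbitrary additional $\Sigma$-symbols to be appended to any word of $\lan(C_r)$, I would introduce a fresh non-terminal $T$ together with regular productions $T \rightarrow T t_j^+$, $T \rightarrow T t_j^-$ for every $1 \le j \le k$, and $T \rightarrow \epsilon$. Then every $\epsilon$-production of $C_r$ (namely $N_j \rightarrow \epsilon$ and $\overline{N_j} \rightarrow \epsilon$) is redirected to $T$ by replacing it with $N_j \rightarrow T$ and $\overline{N_j} \rightarrow T$. This keeps the grammar regular, adds only polynomially many productions, and every derivation decomposes uniquely as a $C_r$-derivation followed by a $T$-subderivation generating an arbitrary element of $\Sigma^\odot$, giving exactly $\lan(C_r) \cdot \Sigma^\odot$.

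For part (i), the key observation is that $p_i$ is produced in $C_\ell$ exclusively through the rule $S_\ell \rightarrow S_\ell \overline{p_i} p_i p_i$, which yields two copies of $p_i$ per application. I would therefore replace this single rule by
\[
S_\ell \rightarrow S_\ell \overline{p_i} p_i p_i (t_1^+)^2 (t_1^-)^2 \cdots (t_k^+)^2 (t_k^-)^2,
\]
and leave the remaining productions of $C_\ell$ untouched. Each application of the new rule produces two copies of $p_i$ together with two copies of every $t_j^+$ and every $t_j^-$, so for every generated word one automatically has $w(t_j^+) = w(t_j^-) = w(p_i)$, while its $\Gamma_i$-projection remains a word of $\lan(C_\ell)$; this establishes the inclusion from left to right. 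For the converse inclusion, pick any $w$ on the right-hand side of (i). Then $\pi_{\Gamma_i}(w) \in \lan(C_\ell)$, so some $C_\ell$-derivation produces it; if this derivation applies the $p_i$-rule $m$ times, then $w(p_i) = 2m$ and by hypothesis $w(t_j^+) = w(t_j^-) = 2m$ for all $j$. Mimicking the same derivation in $C_\ell^\Sigma$ with the modified rule in place of the original produces exactly $w$.

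The only subtlety worth verifying is that no production of $C_\ell$ other than $S_\ell \rightarrow S_\ell \overline{p_i} p_i p_i$ generates $p_i$, which is immediate from the definition of $C_\ell$; this is what makes the hard-coded augmentation faithful. The remaining checks --- regularity of the new rules, logspace computability, and the freedom to rearrange the produced tail of $\Sigma$-symbols by commutativity --- are entirely routine.
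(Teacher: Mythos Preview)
Your construction is correct and the argument is sound; in particular the observation that $p_i$ is produced only by the single rule $S_\ell \rightarrow S_\ell\,\overline{p_i}\,p_i\,p_i$ is exactly what makes the coupling of the $t_j^\pm$ to $p_i$ work in both directions. The paper itself gives no proof of this corollary at all, stating it as an immediate consequence of Lemma~\ref{lem:exponentiating}, so you are simply spelling out one natural realisation of what the authors left implicit.
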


Recall that $H$ already is a regular commutative grammar, and let $c$
be the constant from~(\ref{eqn:constant-c}) and $j\defeq \log c$. We
can now define regular versions $G^r$ and $H^r$ over $\Sigma \cup
\Gamma_{j}$ of $G$ and $H$, respectively, such that $\lan(G^r)
\subseteq \lan(H^r)$ iff $\lan(G)\subseteq \lan(H)$. Let $u$ and $v_i$
be defined as in~(\ref{eqn:u-def}) and~(\ref{eqn:v-i-def}), and let
$C_\ell^\Sigma$ and $C_r^\Sigma$ be as defined in
Corollary~\ref{cor:extended-exponentiating} for the alphabet
$\Gamma_j$, the axiom of $G^r$ is $S_G^r$ and the transitions of $G^r$
are given by
\begin{align*}
  S_G^r   & \rightarrow X p_0 u  &   
  X & \rightarrow X p_0 v_j  & (1\le j\le m)\\
  X   & \rightarrow C_\ell^\Sigma
\end{align*}
Moreover, $H^r$ is the regular commutative grammar such that
\[
\lan(H^r) = \lan(C_r^\Sigma) \cup \lan(H)\cdot \Gamma_{j}^\odot.
\]
The correctness of the construction can be seen as follows. Let $w\in
\lan(G^r)$, we distinguish two cases:
\begin{enumerate}[(i)]
\item If $w(p_i)\neq c\cdot w(p_0)$ then $w\in \lan(C_r^\Sigma)
  \subseteq \lan(H^r)$.
\item Otherwise, $w(p_i)= c \cdot w(p_0)$ %and hence $\pi_\Sigma(w)\in
  %\lan(G)$, 
  and $w\not\in \lan(C_r^\Sigma)$. Consequently, $w \in
  \lan(H^r)$ iff $w\in\lan(H)\cdot \Gamma_{j}^\odot$ thus $\pi_\Sigma(w)\in \lan(H)$. 
  But we know that $\pi_\Sigma(w)\in \lan(G)$.
\end{enumerate}
Concluding, we have $\lan(G^r)\subseteq \lan(H^r)$ if
$\lan(G)\subseteq \lan(H)$. The implication in the opposite direction
is obvious, which completes our proof.

%%% Local Variables:
%%% mode: latex
%%% TeX-master: "main.tex"
%%% ispell-local-dictionary: "british"
%%% LocalWords: MDP MC EXP Markov quantile MDPs PSpace PosSLP Toda's MajSAT SubsetSum DFA parikh th Prop Thm DFAs logarithmically logspace subformulas inequalities
%%% End:

\section{Improved Complexity Bounds for Language Equivalence for Exponent-Sensitive Commutative Grammars}\label{sec:upper}
In this section, we turn towards the equivalence problem for
exponent-sensitive commutative grammars and prove
Theorem~\ref{thm:main-expsens}. Hardness for \coNEXP\ of this problem
directly follows from Theorem~\ref{thm:main}, since regular
commutative grammars are a subclass of exponent-sensitive
grammars. Hence, here we show that the problem can be decided in
\co-\ComplexityFont{2}\NEXP, thereby improving the
\ComplexityFont{2}\EXPSPACE\ upper bound from~\cite{MW13}. As stated
in Section~\ref{sec:preliminaries}, commutative words on the left-hand
sides of the productions of exponent-sensitive commutative grammar as
defined in~\cite{MW13} are encoded in binary. 

It is sufficient to show that inclusion between exponent-sensitive
commutative grammars can be decided in \co-\ComplexityFont{2}\NEXP.
To this end, we follow an approach proposed by Huynh used to show that
inclusion of context-free commutative grammars is
in~\coNEXP~\cite{Huy85}. Let $G$ and $H$ be exponent-sensitive
commutative grammars. The starting point of Huynh's approach is to
derive bounds on the size of a commutative word witnessing
non-inclusion via the semi-linear representation of the reachability
sets of $G$ and $H$. For exponent-sensitive commutative grammars,
$\reach(G)$ and $\reach(H)$ are shown semi-linear with a
representation size doubly exponential in~$\#G + \#H$
in~\cite{MWtr13}, and this representation is also computable in
doubly-exponential time. Given semi-linear sets $M$ and $N$ such that
$M\setminus N$ is non-empty, Huynh shows in~\cite{Huy86} that there is
some $\vec{v}\in M\setminus N$ whose bit-size is polynomial in $\#M +
\#N$. Consequently, if $\lan(G) \not \subseteq \lan(H)$ then the
binary representation of some word $w\in \lan(G)\setminus \lan(H)$ has
size bounded by $2^{2^{p(\#G + \#H)}}$ for some polynomial $p$. Since
the word problem for exponent-sensitive commutative grammars is
in~\PSPACE, deciding $\lan(G)\subseteq \lan(H)$ is in
\ComplexityFont{2}-\EXPSPACE, as observed
in~\cite[Thm.~5.5]{MWtr13}. Now comes the second part of Huynh's
approach into play. In~\cite{Huy85}, a Carath{\'e}odory-type theorem
for semi-linear sets is established: given a linear set
$M=L(\vec{b},P)\subseteq \N^m$, Huynh shows that $M=\bigcup_{i\in
  I}L(\vec{b}_i,P_i) $, where $\vec{b}_i\in L(\vec{b},P)$, each
$\vec{b}_i$ has bit-size polynomial in $\#M$, and $P_i\subseteq P$ has
full column rank and hence in particular $\abs{P_i}\le m$. The key
point is that deciding membership in a linear set with such properties
obviously is in~\P\ using Gaussian elimination, and that we can show
that a semi-linear representation of $\reach(G)$ and $\reach(H)$ in
which every linear set has those properties is computable in
deterministic doubly-exponential time in $\#G + \#H$. Consequently, a
\co-\ComplexityFont{2}\NEXP\ algorithm to decide $\lan(G) \subseteq
\lan(H)$ can initially guess a word $w$ whose representation is
doubly-exponential in $\#G + \#H$, then compute the semi-linear
representations of $\reach(G)$ and $\reach(H)$ in the special form of
Huynh, and check in polynomial time in $\#w$ that $w$ belongs to
$\lan(G)$ and not to $\lan(H)$. We now proceed with the formal
details.

Subsequently, let $s\defeq \#G$, $t\defeq \#H$ and
$\lan(G),\lan(H)\subseteq \Sigma^\odot$. We begin with stating the
relevant facts about the semi-linear representation of the
reachability set of exponent-sensitive commutative grammars. The
subsequent proposition is derived from~\cite[Lem.~5.4]{MWtr13}, which
is stated in terms of generalised communication-free Petri nets, but
as argued in the proof of~\cite[Thm.~6.1]{MWtr13}, there is a
logarithmic-space reduction from exponent-sensitive commutative
grammars to such Petri nets which preserves reachability sets, and
hence allows us to apply~\cite[Lem.~5.4]{MWtr13}.
\begin{proposition}[\cite{MWtr13}]\label{prop:exp-sens-semi-lin}
  There exists a fixed polynomial $p$ such that the
  reachability set $\reach(G) = \bigcup_{i \in I} L(\vec{b}_i,Q_i)$ is
  computable in \DTIME($2^{2^{\poly(s)}}$) such that for every $i\in
    I$,
  \begin{itemize}
  \item $\abs{I}\le 2^{2^{p(s)}}$ and $\abs{Q_i}\le 2^{p(s)}$; and
  \item $\#\vec{b}_i \le p(s)$ and $\#\vec{q}\le p(s)$ for every
    $\vec{q}\in Q_i$.
  \end{itemize}
\end{proposition}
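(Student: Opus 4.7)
\textbf{Proof plan for Proposition~\ref{prop:exp-sens-semi-lin}.} The plan is to reduce the claim to the known semi-linear representation result for generalised communication-free Petri nets, namely \cite[Lem.~5.4]{MWtr13}, via a size-preserving translation from exponent-sensitive commutative grammars. First, I would describe the logarithmic-space reduction alluded to in the prose preceding the proposition: given $G = (N,\Sigma,S,P)$, I construct a generalised communication-free Petri net $N_G$ with one place per symbol in $N \cup \Sigma$, where each production $V \to W$ with $V \in \{U\}^{\oplus}$ becomes a transition consuming $|V|$ tokens from the place $U$ and producing, for each symbol $X \in N \cup \Sigma$, exactly $W(X)$ tokens on the place $X$. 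Binary encoding of the multiplicities on left-hand sides of productions becomes binary encoding of input-arc weights, so $\#N_G = \operatorname{poly}(s)$, and the reduction preserves reachability sets under the natural identification of markings with vectors in $\N^{|N|+|\Sigma|}$.

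Next, I would apply \cite[Lem.~5.4]{MWtr13} to $N_G$ to obtain a semi-linear representation $\reach(N_G) = \bigcup_{i\in I} L(\vec{b}_i, Q_i)$ together with the size bounds on $|I|$, $|Q_i|$, and the bit-sizes $\#\vec{b}_i, \#\vec{q}$, and the fact that the representation is computable in doubly-exponential deterministic time in $\#N_G$. Since $\#N_G$ is polynomial in $s$, composing with a polynomial yields the bounds stated in the proposition for a suitable fixed polynomial $p$, and the doubly-exponential time bound is preserved.

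Third, I would extract $\reach(G)$ from $\reach(N_G)$. Since the reduction is reachability-preserving, no projection or post-processing is needed beyond identifying markings with commutative words over $N \cup \Sigma$; the linear sets in the representation of $\reach(N_G)$ are exactly those of $\reach(G)$, so the bounds on $\#\vec{b}_i$ and $\#\vec{q}$ carry over verbatim.

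The only nontrivial content lies in \cite[Lem.~5.4]{MWtr13}, which is invoked as a black box; the reduction itself and the transfer of bounds are routine size-accounting. The main point to be careful about is that the reduction is indeed logspace and polynomial in representation size under the binary encoding convention adopted for exponent-sensitive grammars in \cite{MW13}, so that applying $p$ on both ends yields a single polynomial bounding all four quantities simultaneously.
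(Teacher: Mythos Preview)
Your proposal is correct and matches the paper's own justification essentially verbatim: the paper does not give a standalone proof but explains in the preceding prose that the proposition is derived from \cite[Lem.~5.4]{MWtr13} via the logarithmic-space, reachability-preserving reduction from exponent-sensitive commutative grammars to generalised communication-free Petri nets established in the proof of \cite[Thm.~6.1]{MWtr13}. Your three steps (logspace translation, black-box application of Lem.~5.4, and size-bound transfer) are exactly this argument spelled out.
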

Next, we introduce Huynh's decomposition of linear sets as described
above. The following proposition is a consequence and a summary of
Proposition~2.6 and Lemmas~2.7 and~2.8 in~\cite{Huy85}.
\begin{proposition}[\cite{Huy85}]\label{prop:caratheodory}
  Let $M=L(\vec{b},Q)$ be a linear set. There is a fixed polynomial
  $p$ such that $M=\bigcup_{i\in I} M_i$ and for every $i\in
  I$, $M_i = L(\vec{b}_i, Q_i)$ with
  \begin{itemize}
  \item $\vec{b}_i \in L(\vec{b},Q)$ and $\# \vec{b}_i \le p(\# M)$;
    and
  \item $Q_i\subseteq Q$ is has full column rank and $\abs{Q_i}=
    \rank(Q)$.
  \end{itemize}
\end{proposition}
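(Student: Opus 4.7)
The plan is to establish a Carath\'eodory-type decomposition: every element of $M = L(\vec{b}, Q)$ is reached by starting from a base vector of polynomial bit-size and taking a conic combination using a linearly independent subset of $Q$ of cardinality $r \defeq \rank(Q)$. The index set $I$ will range over pairs $(Q_i, \vec{b}_i)$ where $Q_i \subseteq Q$ is linearly independent with $\abs{Q_i} = r$ (of which there are at most $\binom{\abs{Q}}{r}$ such $Q_i$) and $\vec{b}_i$ is drawn from a finite, polynomially bounded set $B_{Q_i}$ of ``residue'' bases.

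The core reduction proceeds as follows. Given $\vec{v} = \vec{b} + \sum_{\vec{q} \in Q} \lambda_\vec{q} \vec{q} \in M$, inspect the support $S \defeq \{ \vec{q} : \lambda_\vec{q} > 0 \}$. Whenever $S$ is linearly dependent there is a non-trivial integer relation $\sum_{\vec{q} \in S} c_\vec{q} \vec{q} = 0$ with both positive and negative coefficients, with $\norm{(c_\vec{q})}_\infty$ bounded via Hadamard's inequality applied to an $r \times r$ sub-matrix of the matrix whose columns are the vectors of $Q$. Adding an integer multiple $k$ of such a relation preserves $\vec{v}$; choosing $k$ maximal subject to $\lambda_\vec{q} + k \cdot c_\vec{q} \ge 0$ either strictly decreases $\abs{S}$ or stalls because no integer shift zeroes a coordinate. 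Iterating yields a linearly independent $Q_i \subseteq Q$ of size at most $r$, extended to exactly $r$ if needed (always possible since $\rank(Q) = r$), together with a representation $\vec{v} = \vec{b} + \vec{r} + \sum_{\vec{q} \in Q_i} \mu_\vec{q} \vec{q}$ in which $\vec{r}$ is a bounded integer combination of generators in $Q \setminus Q_i$.

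The main technical obstacle is precisely this stalling phenomenon: $\cone_\N(Q)$ properly contains $\bigcup_{Q_i} \cone_\N(Q_i)$ in general, even though the corresponding rational cones coincide by the classical Carath\'eodory theorem. To handle it, for each linearly independent $Q_i$ of size $r$ I would show that $L(\vec{b}, Q) \cap (\vec{b} + \cone_\Q(Q_i))$ is a finite union of linear sets $L(\vec{b} + \vec{r}, Q_i)$, where each residue $\vec{r}$ belongs to the intersection of $\Z^m$ with the fundamental parallelepiped of $Q_i$ translated by a bounded conic contribution from $Q \setminus Q_i$. The number of such residues is bounded by the determinant of the $r \times r$ matrix formed by $Q_i$, hence by $\norm{Q}_\infty^r$, and each residue has bit-size polynomial in $\#M$ by Hadamard's inequality. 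Ranging over all linearly independent $Q_i$ covers $M$ via the rational Carath\'eodory theorem applied inside $\cone_\Q(Q)$, and the finitely many bounded-size bases collected this way yield the polynomial $p$ promised in the statement.
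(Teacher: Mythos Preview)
The paper does not prove this proposition at all; it only records it as ``a consequence and a summary of Proposition~2.6 and Lemmas~2.7 and~2.8 in~\cite{Huy85}''. So there is no in-paper argument to compare against, and what follows evaluates your sketch on its own merits.

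Your overall plan---a Carath\'eodory-type decomposition with bounded bases---is correct, and the iterative integer shift in your second paragraph is the right engine. The gap is in how you resolve the ``stalling'' case. Your third paragraph proposes to cover $L(\vec b,Q)\cap(\vec b+\cone_{\Q}(Q_i))$ by sets $L(\vec b+\vec r,Q_i)$ where $\vec r$ ranges over integer points of the fundamental parallelepiped of $Q_i$ (shifted by a bounded conic piece from $Q\setminus Q_i$). But such $\vec r$ need not lie in $\cone_\N(Q)$, so $\vec b+\vec r$ need not lie in $L(\vec b,Q)$, which the proposition explicitly requires of every $\vec b_i$. Concretely, take $Q=\{(2,0),(0,1),(1,1)\}$ and $Q_i=\{(2,0),(0,1)\}$: the vector $(3,3)=3\cdot(1,1)\in\cone_\N(Q)$ has parallelepiped residue $(1,0)$, yet $(1,0)\notin\cone_\N(Q)$, and indeed $L(\vec b+(1,0),Q_i)\not\subseteq L(\vec b,Q)$. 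If instead you restrict to residues $\vec r\in\cone_\N(Q)$, you owe an argument that these still cover---and that is precisely the point at issue.

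The clean fix is to notice that ``stalling'' is not failure. When the maximal admissible integer shift $k=\lfloor t_+\rfloor$ does not zero a coordinate, it still forces the minimising coefficient into $[0,\lvert c_{\vec q}\rvert)$, with $\lvert c_{\vec q}\rvert$ bounded by a Hadamard-type estimate depending only on $Q$. Freeze that generator: move its now bounded, nonnegative contribution into the accumulated base and delete $\vec q$ from the active support. Iterate on the strictly smaller active set. After at most $\lvert Q\rvert-r$ rounds the active support is linearly independent; extend it to some $Q_i$ of size $r$. The accumulated base $\vec b_i=\vec b+\sum(\text{bounded nonneg coeffs})\cdot\vec q$ lies in $L(\vec b,Q)$ by construction and has bit-size polynomial in $\#M$. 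This delivers the proposition directly, with no detour through rational cones or parallelepipeds, and is essentially how Huynh's original argument proceeds.
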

Subsequently, for a given $M=L(\vec{b},Q)$, whenever $\bigcup_{i\in I}
L(\vec{b}_i,Q_i)$ has the properties described in
Proposition~\ref{prop:caratheodory}, we say that it is the
Huynh representation of $M$.

\begin{lemma}\label{lem:caratheodory-complexity}
  Let $M=L(\vec{b},Q)$ be a linear set. The Huynh representation of
  $M$ can be computed $\DTIME(2^{\poly(\#M)})$.
\end{lemma}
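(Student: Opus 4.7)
The plan is to enumerate all candidate pairs $(\vec{b}_i, Q_i)$ that can appear in a Huynh representation of $M$, verify the two defining conditions, and output the collection of valid candidates. By Proposition~\ref{prop:caratheodory}, a valid $Q_i$ is a subset of $Q$ of cardinality $\rank(Q)$ with full column rank, and a valid $\vec{b}_i$ lies in $L(\vec{b},Q)$ and has representation size at most $p(\#M)$ for some fixed polynomial $p$. The number of such subsets $Q_i$ is bounded by $\binom{\abs{Q}}{\rank(Q)} \le 2^{\abs{Q}} \le 2^{\#M}$, and the number of integer vectors $\vec{b}_i$ in dimension $m \le \#M$ with each component bounded by $2^{p(\#M)}$ is at most $2^{m \cdot p(\#M)} = 2^{\poly(\#M)}$. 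Hence the total enumeration space has size $2^{\poly(\#M)}$.

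For each candidate the two tests are performed as follows. Full column rank of $Q_i$ is decided by Gaussian elimination over the rationals in polynomial time. Membership $\vec{b}_i \in L(\vec{b},Q)$ reduces to feasibility of the integer linear system $\vec{b} + \sum_j \lambda_j \vec{q}_j = \vec{b}_i$ with $\lambda_j \in \N$; since all coefficients have bit-size polynomial in $\#M$, this instance belongs to \NP\ and thus admits a deterministic decision procedure running in time $2^{\poly(\#M)}$. Iterating over all candidates still fits into the $2^{\poly(\#M)}$ budget, since the product of two quantities of that form is again $2^{\poly(\#M)}$.

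Correctness rests on two observations. First, every candidate pair $(\vec{b}_i, Q_i)$ that passes both tests satisfies $L(\vec{b}_i, Q_i) \subseteq M$, because $\vec{b}_i \in M$ and every element of $Q_i \subseteq Q$ is a period of $M$. Second, Proposition~\ref{prop:caratheodory} guarantees the existence of at least one sub-family of such valid pairs whose union equals $M$; a fortiori, the union of all valid pairs enumerated by the procedure equals $M$, so the output is indeed a Huynh representation of $M$.

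The main obstacle is the bookkeeping: one must verify that the polynomial-size blow-ups in the base vectors together with the ILP certificates do not push the overall cost beyond $2^{\poly(\#M)}$. No new combinatorial insight is needed; the lemma is essentially a complexity-theoretic readout of Proposition~\ref{prop:caratheodory} combined with standard bounds on the sizes of solutions to integer linear programs.
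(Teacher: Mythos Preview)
Your approach is essentially identical to the paper's: enumerate all candidate base vectors $\vec{b}_i$ of bit-size at most $p(\#M)$ and test membership in $L(\vec{b},Q)$ via an $\NP$ oracle, enumerate all full-rank subsets $Q_i\subseteq Q$ of cardinality $\rank(Q)$, and output all combinations. Your write-up is in fact more complete than the paper's, since you explicitly justify the enumeration bounds and add the correctness argument (each $L(\vec{b}_i,Q_i)\subseteq M$, and Proposition~\ref{prop:caratheodory} ensures the union covers $M$), which the paper leaves implicit.
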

\begin{proof}
  Let $p$ be the polynomial from Proposition~\ref{prop:caratheodory}.
  First, we compute the set of $\vec{b}_i$ as follows: we enumerate
  all candidates $\vec{b}_i$ such that $\#\vec{b}_i\le p(\#M)$, there
  is at most an exponential number of them. For every candidate we
  check if $\vec{b}_i\in L(\vec{b},Q)$, which can be done in
  $\NP$. Next, we enumerate all subsets $Q_i\subseteq Q$ of full
  column rank and cardinality $\rank(Q)$, again there are at most
  exponentially many of them. Finally, we output the all possible
  combinations of the $\vec{b}_i$ with the $Q_i$.
\end{proof}
\begin{lemma}
  The Huynh representation of $\reach(G)$ can be computed in
  $\DTIME(2^{2^{\poly(s)}})$.
\end{lemma}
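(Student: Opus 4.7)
The plan is to chain together the two preceding results: first use Proposition~\ref{prop:exp-sens-semi-lin} to obtain a doubly-exponential-size semi-linear representation of $\reach(G)$, and then apply Lemma~\ref{lem:caratheodory-complexity} to each individual linear set in that representation to put it into Huynh form, taking the union of the resulting representations.

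More precisely, I would first invoke Proposition~\ref{prop:exp-sens-semi-lin} to compute, in $\DTIME(2^{2^{\poly(s)}})$, a semi-linear representation $\reach(G) = \bigcup_{i \in I} L(\vec{b}_i, Q_i)$ with $\abs{I} \le 2^{2^{p(s)}}$, $\abs{Q_i} \le 2^{p(s)}$, and each vector (base or period) of bit-size at most $p(s)$. From these bounds, each individual linear set $L(\vec{b}_i, Q_i)$ has representation size
\[
\#L(\vec{b}_i, Q_i) \le p(s) + 2^{p(s)} \cdot p(s) \le 2^{q(s)}
\]
for some polynomial $q$, i.e., singly exponential in $s$.

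Next I would apply Lemma~\ref{lem:caratheodory-complexity} to each $L(\vec{b}_i, Q_i)$ separately. This yields its Huynh representation in time $\DTIME(2^{\poly(\#L(\vec{b}_i, Q_i))}) = \DTIME(2^{\poly(2^{q(s)})}) = \DTIME(2^{2^{\poly(s)}})$. The Huynh representation of $\reach(G)$ is then simply the union, over $i \in I$, of the Huynh representations of the $L(\vec{b}_i, Q_i)$, which by Proposition~\ref{prop:caratheodory} is also a valid Huynh representation in the sense that every linear set in it has a base of polynomial bit-size in the size of its containing linear set (hence still polynomial in $\#\reach(G)$) and a period set of full column rank and cardinality at most the ambient dimension.

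Finally I would collect the time bound. The outer loop runs $\abs{I} \le 2^{2^{p(s)}}$ times, each iteration taking at most $2^{2^{\poly(s)}}$ time, so the total running time is bounded by $2^{2^{p(s)}} \cdot 2^{2^{\poly(s)}} = 2^{2^{\poly(s)}}$, absorbing the initial doubly-exponential cost of Proposition~\ref{prop:exp-sens-semi-lin}. No real obstacle arises; the only thing to be careful about is that the size of one linear set in the input representation is only singly exponential in $s$, so the ``exponential in $\#M$'' cost of Lemma~\ref{lem:caratheodory-complexity} remains within the allowed $2^{2^{\poly(s)}}$ budget when applied component-wise rather than to $\reach(G)$ as a whole.
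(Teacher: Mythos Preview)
Your proposal is correct and follows essentially the same approach as the paper: apply Proposition~\ref{prop:exp-sens-semi-lin} to obtain the semi-linear representation, then apply Lemma~\ref{lem:caratheodory-complexity} to each linear set individually, observing that each has singly-exponential size so the per-component cost stays within the doubly-exponential budget. Your version is in fact slightly more explicit about the size and time bookkeeping than the paper's own proof.
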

\begin{proof}
  First, we apply Proposition~\ref{prop:exp-sens-semi-lin} in order to
  compute a semi-linear representation $\bigcup_{i\in
    I}L(\vec{b}_i,Q_i)$ of $\reach(G)$ such that $\abs{I}\le
  2^{2^{p(s)}}$, $\abs{Q_i}\le 2^{p(s)}$, and $\#\vec{b}_i \le p(s)$
  and $\#\vec{q}\le p(s)$ for every $\vec{q}\in Q_i$ for some fixed
  polynomial $p$. By
  Lemma~\ref{lem:caratheodory-complexity}, from every
  $M_i=L(\vec{b}_i,Q_i)$ we can compute an equivalent
  Huynh representation $N_i=\bigcup_{j\in J_i}
  L(\vec{c}_{i,j},R_{i,j})$ of $M_i$ in
  $\DTIME(2^{\poly(\#M_i)})=\DTIME(2^{2^{\poly(s)}})$. Thus, the
  overall procedure also runs in $\DTIME(2^{2^{\poly(s)}})$.
\end{proof}

As the final ingredient, we state Huynh's result that whenever
inclusion between two semi-linear sets does not hold then there exists
a witness of polynomial bit-size.
\begin{proposition}[\cite{Huy86}]\label{prop:semi-lin-inclusion}
  Let $M,N \subseteq \N^m$ be semi-linear sets. There is a fixed
  polynomial $p$ such that whenever $M\not \subseteq N$ then
  there exists some $\vec{v}\in M\setminus N$ such that $\size
  \vec{v}\le p(\size M + \size N)$.
\end{proposition}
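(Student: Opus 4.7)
I would write the two semi-linear sets in the representations that witness their sizes, say $M = \bigcup_{i \in I} L(\vec{b}_i, P_i)$ and $N = \bigcup_{j \in J} L(\vec{c}_j, Q_j)$, and reduce to a single linear component of $M$. Since $M \not\subseteq N$ means some $L(\vec{b}_i, P_i)$ contains an element not in $N$, I would fix such an $i$ and search for a small $\vec{\lambda} \in \N^{|P_i|}$ with $\vec{v} \defeq \vec{b}_i + P_i \vec{\lambda} \notin L(\vec{c}_j, Q_j)$ for every $j \in J$. Any $\vec{\lambda}$ of polynomial bit-size immediately yields a $\vec{v}$ of polynomial bit-size, since $\#\vec{v}$ is linear in $\#\vec{\lambda}$, $\#\vec{b}_i$ and $\#P_i$.

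The core of the proof is to describe, for each pair $(i,j)$, the ``bad'' set $B_{i,j} \defeq \{\vec{\lambda} \in \N^{|P_i|} : \vec{b}_i + P_i \vec{\lambda} \in L(\vec{c}_j, Q_j)\}$ by a semi-linear representation whose parameters have bit-size polynomial in $\#M + \#N$. This set is the projection onto the $\vec{\lambda}$-coordinates of the solution set of the linear Diophantine system $P_i \vec{\lambda} - Q_j \vec{\mu} = \vec{c}_j - \vec{b}_i$ with $\vec{\lambda}, \vec{\mu} \ge 0$, and for this Proposition~\ref{prop:pottier} directly supplies the required bit-size estimates on the bases and periods.

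Next, I would complement each $B_{i,j}$ inside $\N^{|P_i|}$ using Ginsburg--Spanier closure, intersect the resulting semi-linear sets over $j \in J$, and extract a polynomially-bounded witness $\vec{\lambda}$ from a single linear component of this intersection, again via Proposition~\ref{prop:pottier}. The correctness is clear by construction: any such $\vec{\lambda}$ avoids every $B_{i,j}$, hence $\vec{b}_i + P_i \vec{\lambda} \in L(\vec{b}_i, P_i) \setminus \bigcup_j L(\vec{c}_j, Q_j) = L(\vec{b}_i, P_i) \setminus N \subseteq M \setminus N$.

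\emph{Main obstacle.} The delicate step is propagating polynomial size bounds through complementation and intersection. A black-box application of closure under complement can blow the representation up exponentially, which would only yield an exponential witness. To avoid this, I would exploit the explicit structure of $\N^{|P_i|} \setminus B_{i,j}$: membership fails either because $\vec{b}_i + P_i \vec{\lambda} - \vec{c}_j$ has a negative component (a halfspace condition of polynomial bit-size in $\vec{\lambda}$), or because a congruence condition modulo a determinantal divisor of $Q_j$ is violated (again polynomial in $\#Q_j$ and $\#\vec{c}_j$ by Hermite or Smith normal form bounds). Encoding the conjunction over $j \in J$ of these failure conditions as a single system of Diophantine inequalities and congruences, and then applying Proposition~\ref{prop:pottier} to a feasibility certificate, is exactly the argument of Huynh~\cite{Huy86} and is the technically demanding part of the proof.
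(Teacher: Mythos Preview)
The paper does not prove this proposition; it is quoted verbatim as a result of Huynh~\cite{Huy86} and used as a black box in Section~\ref{sec:upper}. There is therefore no proof in the paper to compare your proposal against.

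As for the content of your sketch, the overall architecture---fix a linear piece of $M$, describe the preimage sets $B_{i,j}$ via Proposition~\ref{prop:pottier}, and extract a small $\vec{\lambda}$ from the complement of their union---is the natural line and is indeed close in spirit to what Huynh does. But your ``main obstacle'' paragraph oversimplifies the structure of $\N^{|P_i|}\setminus B_{i,j}$. The characterisation ``a negative component or a violated congruence modulo a determinantal divisor of $Q_j$'' only describes the complement of an \emph{affine sublattice intersected with the nonnegative orthant}; the integer cone $\cone(Q_j)$ is in general strictly smaller than such a set when $Q_j$ has more columns than its rank, and the complement of a general linear set inside $\N^m$ cannot be captured by a single halfspace-plus-congruence disjunction. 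Huynh's actual argument does not attempt an explicit semi-linear complement; it instead bounds a witness directly via a carefully set up system of (in)equalities. If you want to carry your plan through, you will need either the full quantitative complementation bounds for semi-linear sets (which are not polynomial in general) or a different route to the small witness that bypasses explicit complementation.
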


We are now fully prepared to prove the main statement of this section,
which immediately yields the upper bound for
Theorem~\ref{thm:main-expsens}.
\begin{proposition}
  Deciding $\lan(G)\subseteq \lan(H)$ is in
  \co-\ComplexityFont{2}$\NEXP$.
\end{proposition}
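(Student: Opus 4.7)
The plan is to combine the three ingredients laid out above into a single \co-\ComplexityFont{2}\NEXP\ algorithm for $\lan(G) \subseteq \lan(H)$; equivalently, a nondeterministic doubly-exponential time procedure for the complement that guesses a witness $w \in \lan(G) \setminus \lan(H)$ and then verifies both memberships in deterministic doubly-exponential time.

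For the guess, observe that Proposition~\ref{prop:exp-sens-semi-lin} yields explicit semi-linear representations of $\reach(G)$ and $\reach(H)$ of total size at most $2^{2^{\poly(s+t)}}$. Since $\lan(G) = \reach(G) \cap \Sigma^\odot$ (and similarly for $H$), and $\Sigma^\odot$ is itself a linear set of polynomial size, the same bound holds for semi-linear representations of $\lan(G)$ and $\lan(H)$. Applying Proposition~\ref{prop:semi-lin-inclusion}, whenever $\lan(G) \not\subseteq \lan(H)$ there exists some $w \in \lan(G) \setminus \lan(H)$ with $\# w \le 2^{2^{\poly(s+t)}}$. The algorithm begins by nondeterministically guessing such a $w \in \Sigma^\odot$ encoded in binary.

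For the verification, the algorithm then deterministically computes the Huynh representations of $\reach(G)$ and $\reach(H)$: it first applies Proposition~\ref{prop:exp-sens-semi-lin} in time $2^{2^{\poly(s+t)}}$, and then invokes Lemma~\ref{lem:caratheodory-complexity} on each linear set $L(\vec{b}_i,Q_i)$ of the resulting representation. Since $\# L(\vec{b}_i, Q_i) \le 2^{\poly(s+t)}$, each such call runs in time $2^{2^{\poly(s+t)}}$, and altogether we obtain Huynh representations $\reach(G) = \bigcup_{i} L(\vec{c}_i, R_i)$ and $\reach(H) = \bigcup_{j} L(\vec{d}_j, T_j)$, with at most $2^{2^{\poly(s+t)}}$ linear sets, bases of bit-size polynomial in $s+t$, and period matrices of full column rank and at most $|\Sigma|$ columns. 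Intersecting with $\Sigma^\odot$ (i.e.\ restricting to components indexed by terminal symbols) preserves all of these properties.

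The verification then reduces to testing membership of $w$ in each such linear set. The key point is that for a linear set $L(\vec{b}, Q)$ whose period matrix $Q$ has full column rank, deciding whether $w \in L(\vec{b}, Q)$ amounts to checking that the unique rational solution $\vec{\lambda}$ of $Q \vec{\lambda} = w - \vec{b}$, obtainable by Gaussian elimination, is a non-negative integer vector; this runs in time polynomial in $\# w + \# \vec{b} + \# Q$, hence polynomial in $\# w$. The algorithm accepts iff $w$ lies in some $L(\vec{c}_i, R_i) \cap \Sigma^\odot$ and in none of the $L(\vec{d}_j, T_j) \cap \Sigma^\odot$; this requires $2^{2^{\poly(s+t)}}$ such checks, each polynomial in $\# w \le 2^{2^{\poly(s+t)}}$, so total time $2^{2^{\poly(s+t)}}$. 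The one point requiring a little care is that $w$ has doubly-exponential bit-size and must be stored and arithmetically manipulated throughout, but this is comfortably within the deterministic $2^{2^{\poly(s+t)}}$ budget of the verifier of a \co-\ComplexityFont{2}\NEXP\ machine.
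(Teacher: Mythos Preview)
Your proposal is correct and follows essentially the same approach as the paper: guess a doubly-exponential-size witness $w$ using Propositions~\ref{prop:exp-sens-semi-lin} and~\ref{prop:semi-lin-inclusion}, compute the Huynh representations of $\reach(G)$ and $\reach(H)$ in deterministic doubly-exponential time, and then exploit the full column rank of the period matrices to decide membership in each linear set via Gaussian elimination. The only minor remark is that ``restricting to components indexed by terminal symbols'' is projection rather than intersection with $\Sigma^\odot$; what you actually want is to pad $w$ with zeros on the non-terminal coordinates and test membership in the linear sets of $\reach(G)$ and $\reach(H)$ directly (which is what the paper does implicitly), but this is a cosmetic point and does not affect the argument.
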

\begin{proof}
  We describe a \co-\ComplexityFont{2}\NEXP-algorithm. First, by
  combining Proposition~\ref{prop:exp-sens-semi-lin} with
  Proposition~\ref{prop:semi-lin-inclusion}, if $\lan(G)\not \subseteq
  \lan(H)$ then there is some $w\in \Sigma^\odot$ such that $\#w \le
  2^{2^{p(s+t)}}$ for some fixed polynomial $p$. The algorithm
  non-deterministically chooses such a $w$. Now the algorithm computes
  the Huynh representations of $\reach(G)$ and $\reach(H)$ in
  $\DTIME(2^{2^{\poly(s+t)}})=\DTIME(\poly(\#w))$. For every linear
  set $M=L(\vec{b}, Q)$ in the Huynh representation of $\reach(G)$ and
  $\reach(H)$, $\# \vec{b} \le p(s+t)$, $\#\vec{q}\le p(s+t)$ for all
  $\vec{q}\in Q$ and some fixed polynomial $p$, and $Q$ has full
  column rank and hence $\abs{Q}\le \abs{\Sigma}$. Thanks to those
  properties, $w\in L(\vec{b},Q)$ can be decided in
  $\DTIME(\poly(\#M))$ using Gaussian elimination. Consequently,
  checking $w \in \lan(G)\setminus \lan(H)$ can be performed in
  $\DTIME(\poly(\#w))$.
\end{proof}

%%% Local Variables:
%%% mode: latex
%%% TeX-master: "main.tex"
%%% ispell-local-dictionary: "british"
%%% LocalWords: MDP MC EXP Markov quantile MDPs PSpace PosSLP Toda's MajSAT SubsetSum DFA parikh th Prop Thm DFAs logarithmically logspace Huynh Huynh's Carath odory
%%% End:

\section{Applications to Equivalence Problems for Classes of Petri Nets, BPPs and
  Reversal-Bounded Counter Automata}
Here, we discuss immediate corollaries of Theorems~\ref{thm:main}
and~\ref{thm:main-expsens} for equivalence problems for various
classes of Petri nets, basic parallel process nets (BPP-nets) and
reversal-bounded counter automata.

It has, for instance, been observed in~\cite{Esp97,Yen97,MW15} that
context-free commutative grammars can be seen as notational variants
of communication-free Petri nets and BPP-nets. This allows for
transferring results on standard decision problems between these
formalisms. We do not formally introduce communication-free Petri nets
and BPP-nets here. Informally speaking, in those nets non-terminal and
terminal symbols of commutative context-free grammars correspond to
places in those nets, where every transition can remove a token from
at most one place, and where tokens cannot be removed from places
corresponding to terminal symbols. The equivalence problem for
communication-free Petri nets and BPP-nets is to decide
whether the set of reachable markings of two given nets coincides. In
particular, this requires that both nets have the same set of
places. Via a reduction to the equivalence problem for context-free
commutative grammars, it is possible to obtain a \coNEXP-upper bound
for the equivalence problem of communication-free Petri nets and
BPP-nets~\cite{MW15}. On the other hand, Theorem~\ref{thm:main}
together with the strengthened construction given
in~(\ref{eqn:bpp-reduction}) yields a matching lower bound.

\begin{theorem}
  The equivalence problem for communication-free Petri nets and
  BPP-nets is \coNEXP-complete.
\end{theorem}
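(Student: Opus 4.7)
The plan is to handle the two directions of \coNEXP-completeness separately, leveraging the standard correspondence between context-free commutative grammars and communication-free Petri nets/BPP-nets from~\cite{Esp97,Yen97,MW15}. In that correspondence, non-terminal and terminal symbols of a grammar are identified with places of the net; every production $V\rightarrow W$ (with $V\in N$ a single non-terminal) becomes a single transition that removes the token on $V$ and adds tokens on the places occurring in $W$, which is exactly the communication-free restriction; and derivations $U\Rightarrow^*_G V$ correspond to firing sequences from $U$ to $V$. This translation is log-space computable and preserves the entire reachability set.

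For the \coNEXP\ upper bound I would, given two communication-free Petri nets or BPP-nets over a common set of places, log-space translate each one backwards into a context-free commutative grammar over a common set of non-terminals. Equality of the sets of reachable markings then coincides with equality of the full reachability sets $\reach(\cdot)$ of the two grammars. The semi-linear reachability argument underlying Huynh's original upper bound and the proof of Theorem~\ref{thm:main} applies to $\reach(\cdot)$ just as well as to $\lan(\cdot)$; this is essentially what is carried out in~\cite{MW15}.

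For the matching lower bound I would reuse the strengthened construction from Section~3, which was designed with precisely this application in mind. The grammars $G^e$ and $H^e$ built there share a common set of non-terminals $N$ and enjoy property~(\ref{eqn:bpp-reduction}): $\phi$ is valid iff $\reach(G^e) = \reach(H^e)$. Translating $G^e$ and $H^e$ through the grammar-to-net correspondence produces two communication-free Petri nets (resp.\ BPP-nets) over the common set of places derived from $N$, whose reachable markings coincide iff $\phi$ is valid; composed with the log-space hardness reduction from $\Pi_2$-Presburger validity and Proposition~\ref{prop:pa-pi-2-complexity}, this yields \coNEXP-hardness. The only delicate point is the bookkeeping required to verify that the grammar-to-net translation commutes with taking the full reachability set (rather than merely with taking the intersection with $\Sigma^\odot$, i.e., the language), so that reachability equivalence of the grammars really does transfer to marking equivalence of the nets over the shared place set. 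Since the correspondence is step-for-step, this check is routine, and no complexity-theoretic ingredient beyond Theorem~\ref{thm:main} and~(\ref{eqn:bpp-reduction}) is needed.
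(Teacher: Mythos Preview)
Your proposal is correct and mirrors the paper's own argument: the upper bound is obtained by reducing net equivalence to context-free commutative grammar equivalence via~\cite{MW15}, and the lower bound by combining Theorem~\ref{thm:main} with the strengthened construction~(\ref{eqn:bpp-reduction}) and the standard grammar-to-net correspondence. The paper states this in two sentences without spelling out the translation details you mention, but the content is the same.
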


As already briefly mentioned in Section~\ref{sec:upper},
exponent-sensitive commutative grammars are closely related to
so-called generalised communication-free Petri nets, and in fact
inter-reducible with them~\cite[Thm.~6.1]{MWtr13}. Similarly to
communication-free Petri nets, transitions of generalised
communication-free Petri nets may only remove tokens from one place,
however they are allowed to remove an arbitrary number of them, not
just one. For the sake of completeness, let us state
Theorem~\ref{thm:main-expsens} in terms of generalised
communication-free Petri nets.
\begin{theorem}
  The equivalence problem for generalised communication-free Petri
  nets is \coNEXP-hard and in \co-\ComplexityFont{2}\NEXP.
\end{theorem}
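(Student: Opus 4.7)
The plan is to obtain both bounds as direct corollaries of Theorem~\ref{thm:main-expsens} via the logarithmic-space inter-reducibility between exponent-sensitive commutative grammars and generalised communication-free Petri nets established in~\cite[Thm.~6.1]{MWtr13}. Under that correspondence, places of the net are in bijection with the union $N\cup \Sigma$ of non-terminals and terminals of the grammar, markings correspond to elements of ${(N\cup \Sigma)}^\odot$, and the firing relation is chosen so that the set of reachable markings of the net coincides with the reachability set $\reach(G)$ of the grammar. In particular, reachable-marking equivalence of two such nets sharing the same set of places coincides with reachability-set equivalence of the two associated exponent-sensitive grammars.

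For the upper bound, I would observe that the \co-\ComplexityFont{2}\NEXP\ algorithm of Section~\ref{sec:upper} is built directly on the Huynh representation of $\reach(G)$ and $\reach(H)$ (cf.\ Proposition~\ref{prop:exp-sens-semi-lin}), and therefore decides not only language inclusion but also reachability-set inclusion. Composing this algorithm with the log-space translation from nets to grammars of~\cite[Thm.~6.1]{MWtr13} yields a \co-\ComplexityFont{2}\NEXP\ procedure for the equivalence problem for generalised communication-free Petri nets.

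For the lower bound, the point is to start from the strengthened reduction summarised in~(\ref{eqn:bpp-reduction}), which asserts that the two regular commutative grammars $G^e$ and $H^e$ built in Section~3 satisfy $\reach(G^e)=\reach(H^e)$ iff $\phi$ is valid. Every regular commutative grammar is in particular exponent-sensitive, so the log-space reduction from grammars to generalised communication-free Petri nets of~\cite[Thm.~6.1]{MWtr13} applied to $G^e$ and $H^e$ produces two nets over the same set of places whose reachable-marking sets coincide iff $\phi$ is valid. Combining with Proposition~\ref{prop:pa-pi-2-complexity} gives \coNEXP-hardness.

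The only real pitfall is to keep the two notions of equivalence straight: Petri net equivalence is about reachable markings (which include tokens on places corresponding to non-terminal symbols), whereas language equivalence of commutative grammars only sees the projection onto terminal symbols. This is exactly why the hardness half must invoke~(\ref{eqn:bpp-reduction}) rather than Lemmas~\ref{lem:inclusion-valid}--\ref{lem:valid-inclusion} directly, and why on the upper bound side I emphasise that the algorithm from Section~\ref{sec:upper} acts on $\reach(G)$ rather than $\lan(G)$. Everything else is routine once the correspondence of~\cite[Thm.~6.1]{MWtr13} is in place.
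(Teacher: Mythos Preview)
Your proposal is correct and follows essentially the same approach as the paper, which simply records the theorem as a restatement of Theorem~\ref{thm:main-expsens} via the logarithmic-space inter-reducibility of~\cite[Thm.~6.1]{MWtr13}; your added care about distinguishing reachability-set equivalence from language equivalence is a welcome elaboration of what the paper leaves implicit. One small slip: the grammars $G^e$ and $H^e$ of~(\ref{eqn:bpp-reduction}) are context-free rather than regular (the regular variants are $G^r$ and $H^r$, and~(\ref{eqn:bpp-reduction}) is stated before that conversion), but since context-free commutative grammars are already exponent-sensitive your argument goes through unchanged.
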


We now turn towards the equivalence problem for reversal-bounded
counter automata. For our purposes, it is sufficient to introduce
reversal-bounded counter automata on an informal level, formal
definitions can be found in~\cite{Ibar78}, see also~\cite{Ibar14} for
a recent survey on decision problems for reversal-bounded counter
automata. A counter automaton comprises a finite-state controller with
a finite number of counters ranging over the natural numbers that can
be incremented, decremented or tested for zero along a transition. A
classical result due to Minsky states that reachability in counter
automata is undecidable already in the presence of two
counters~\cite{Min61}. One way of overcoming this problem is to bound
the number of times a counter is allowed to switch between increasing
and decreasing mode. Consider, for instance, a counter of some counter
automaton whose values along a run are $0,1,2,3,5,4, 4, 3, 3, 4, 5,
6$. On this run, the counter reverses its mode twice, once from
incrementing to decrementing mode (when decrementing from 5 to 4), and
one more time from decrementing to incrementing mode (when,
thereafter, incrementing from 3 to 4). In this example, the number of
reversals of the counter is bounded by two. A $k$-reversal bounded
counter automaton is a counter automaton whose counters are only
allowed to have at most $k$ reversals along a run.

Ibarra~\cite{Ibar78} has shown that the sets of reachable
configurations of reversal-bounded counter automata are effectively
semi-linear. Hague and Lin~\cite{HL11} showed that from a
reversal-bounded counter automaton one can construct in polynomial
time an open existential Presburger formula $\varphi(\vec{x})$
defining the set of reachable configurations, i.e., the sets of
counter values with which a target control state is reached starting
in an initial configuration. The equivalence problem for reversal
bounded counter automata over the same number of counters is to decide
whether their reachability sets are the same.

An application of the result of Hague and Lin combined with
Proposition~\ref{prop:pa-pi-2-complexity} immediately yields a
\coNEXP-upper bound for the equivalence problem. Given two
reversal-bounded counter automata whose reachability sets are defined
by existential Presburger formulas $\varphi(\vec{x})$ and
$\psi(\vec{x})$, respectively, their reachability set is equivalent
iff $\phi\defeq \forall \vec{x}. \phi(\vec{x}) \leftrightarrow
\psi(\vec{x})$ is valid. Since $\phi$ is a $\Pi_2$-sentence of
Presburger arithmetic, Proposition~\ref{prop:pa-pi-2-complexity}
yields a \coNEXP-upper bound for the equivalence problem. On the other
hand, Theorem~\ref{thm:main} gives that equivalence is already
\coNEXP-hard for regular commutative grammars. Now the latter can
immediately be simulated by a $0$-reversal bounded counter automaton
by introducing one counter for each alphabet symbol in $\Sigma$,
treating non-terminal symbols as control states, and incrementing the
counter corresponding to some $a\in \Sigma$ whenever a production
$V\rightarrow Wa$ is simulated. Consequently, we have proved the
following theorem.

\begin{theorem}
  The equivalence problem for reversal-bounded counter automata is
  \coNEXP-complete, and in particular \coNEXP-hard for $0$-reversal
  bounded counter automata with no zero tests whose constants are
  encoded in unary.
\end{theorem}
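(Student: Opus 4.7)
The plan splits cleanly into an upper bound and a lower bound, both of which follow essentially by combining already-established tools.

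For the upper bound, I would start from the result of Hague and Lin stating that from a reversal-bounded counter automaton one can construct in polynomial time an existential Presburger formula $\varphi(\vec{x})$ whose set of satisfying assignments is exactly the reachability set of the automaton. Given two such automata over the same number of counters, with reachability sets described by existential Presburger formulas $\varphi(\vec{x})$ and $\psi(\vec{x})$, equivalence of the reachability sets is logically expressed by the sentence $\phi \defeq \forall \vec{x}.\, (\varphi(\vec{x}) \leftrightarrow \psi(\vec{x}))$. After expanding the biconditional into a conjunction of two implications and pushing the existential quantifiers of $\varphi$ and $\psi$ outward where possible, this sentence lies in the $\Pi_2$-fragment of Presburger arithmetic and has polynomial size in the input. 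An appeal to Proposition~\ref{prop:pa-pi-2-complexity} then gives the \coNEXP\ upper bound.

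For the lower bound, the plan is to reduce from equivalence for regular commutative grammars, which is \coNEXP-hard by Theorem~\ref{thm:main}. Given a regular commutative grammar $G=(N,\Sigma,S,P)$, I would construct a counter automaton whose control states are exactly the non-terminals $N$ and whose counters are in bijection with the alphabet symbols of $\Sigma$. Each production $V \rightarrow W a_{i_1}^{n_1}\cdots a_{i_r}^{n_r}$ (with $W \in N \cup \{\epsilon\}$) is translated into a transition from state $V$ to state $W$ (or to a designated final state if $W=\epsilon$) that increments each counter $a_{i_j}$ by $n_j$. Since the grammar is regular and words are encoded in unary in its productions, each such transition is of polynomial size with unary constants; no zero tests are used, and counters are only incremented, so the resulting automaton is trivially $0$-reversal bounded. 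The reachability set of this automaton (with start state $S$) is exactly the Parikh image of $\lan(G)$, viewed as a subset of $\N^{|\Sigma|}$; applying the construction to two grammars $G,H$ on the same alphabet yields two counter automata whose reachability sets agree iff $\lan(G)=\lan(H)$.

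Combining the two directions gives \coNEXP-completeness in general, together with the sharper statement that hardness already holds in the restricted setting of $0$-reversal bounded counter automata with no zero tests and unary constants. The main conceptual point to be careful about is the upper-bound step: one must verify that the biconditional formulation $\forall \vec{x}.\, (\varphi \leftrightarrow \psi)$ really does put the resulting sentence in $\Pi_2$ after the internal existential quantifiers are handled, but this is routine since the negations land on existential formulas and turn them into $\forall$-quantifiers of matching polarity. Neither direction requires any new machinery beyond what has already been developed in the paper.
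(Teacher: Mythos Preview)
Your proposal is correct and follows essentially the same route as the paper: the upper bound via Hague--Lin plus the $\Pi_2$-Presburger validity bound of Proposition~\ref{prop:pa-pi-2-complexity}, and the lower bound by simulating regular commutative grammars with $0$-reversal bounded, zero-test-free, unary-encoded counter automata exactly as you describe. The paper's argument is slightly terser but contains no additional ideas beyond what you outline.
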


%%% Local Variables:
%%% mode: latex
%%% TeX-master: "main.tex"
%%% ispell-local-dictionary: "british"
%%% LocalWords: MDP MC EXP Markov quantile MDPs PSpace PosSLP Toda's MajSAT SubsetSum DFA parikh th Prop Thm DFAs logarithmically logspace BPP Ibarra
%%% End:

\section{Conclusion}
In this paper, we showed that language inclusion and equivalence for
regular and context-free commutative grammars are \coNEXP-complete,
resolving a long-standing open question posed by Huynh~\cite{Huy85}.
Our lower bound also carries over to the equivalence problem for
exponent-sensitive commutative grammars, for which we could also
improve the \ComplexityFont{2}-\EXPSPACE-upper bound~\cite{MW13} to
\co-\ComplexityFont{2}$\NEXP$. The precise complexity of this problem
remains an open problem of this paper. An overview over the complexity
of word and equivalence problems for commutative grammars together
with references to the literature is provided in
Table~\ref{tab:complexity}.
\begin{table}[t]
  \begin{center}
    \renewcommand{\arraystretch}{1}
    \begin{tabular}{|r|c|c|}
      \hline
      com.\ grammar
       & word problem & language equivalence\\
      \hline\hline
      type-0 & \EXPSPACE-h.~\cite{Lipt76}, $\in \mathbf{F}_{\omega^3}$~\cite{LS15}
      & undecidable~\cite{Hack76}\\
      \hline
      cont.-sensitive & \PSPACE-complete~\cite{Huy83} & undecidable~\cite{Huy85}\\
      \hline
      exp.-sensitive & \PSPACE-complete~\cite{MW13} & 
      \mbox{\coNEXP-h.,
      $\in \co$-$\ComplexityFont{2}\NEXP$}\\
      \hline
      context-free & &  \\
      \cline{1-1}
      regular & \multirow{-2}{*}{\NP-complete~\cite{Huy83,Esp97}} & 
      \multirow{-2}{*}{\coNEXP-complete} \\
      \hline
      %% $d=1$ & \PSPACE-c.\cite{FJ13} & \multicolumn{2}{c||}{\NP-complete~\cite{hkow-concur09}} & \multicolumn{3}{c|}{\cellcolor{lightgray} \PSPACE-complete}\\
      %% \hline
      %% $d>1$ & \PSPACE-c.\cite{how-rp12} & \EXPSPACE-h.,dec.\cite{Lipt76,Mayr81} & \NP-c. & \PSPACE-c. & \multicolumn{2}{c|}{undec.~\cite{Rei13}}\\
      %% \hline
    \end{tabular}
  \end{center}  
  \caption{Complexity of the word and the equivalence problem for
    classes of commutative grammars.}\label{tab:complexity}
\end{table}

It is interesting to note the non-monotonic behaviour of regular
grammars with respect to the complexity of the equivalence problem.
In the non-commutative setting, language equivalence
is \PSPACE-complete, and hardness even holds when the number of
alphabet symbols is fixed. In contrast, in the commutative setting
language equivalence is $\ComplexityFont{\Pi_2^\P}$-complete when the
number of alphabet symbols is fixed~\cite{KT10,Kop15}, and
$\coNEXP$-complete for an alphabet of arbitrary size as shown in this
paper.

One major open problem related to the problems discussed in this paper
is weak bisimilartiy between basic parallel processes. This problem is
not known to be decidable
and \PSPACE-hard~\cite{Srba03}. Unfortunately, it does not seem
possible to adjust the construction of our \coNEXP-lower bound to also
work for weak bisimulation.

\subsection*{Acknowledgements.} We would like to thank Matthew Hague for
clarifying some questions regarding reversal-bounded counter automata.

%%% Local Variables:
%%% mode: latex
%%% TeX-master: "main.tex"
%%% ispell-local-dictionary: "british"
%%% LocalWords: MDP MC EXP Markov quantile MDPs PSpace PosSLP Toda's MajSAT SubsetSum DFA parikh th Prop Thm DFAs logarithmically logspace Huynh Parikh Parikh's Mayr Weihmann Huynh's Kopczy fortiori bisimilartiy bisimulation
%%% End:

\bibliographystyle{amsalpha}
\bibliography{bibliography}

\end{document}